\setlist[itemize,1]{label=$\,\bullet \,\,$}
\setlist[itemize,2]{label=$\,\bullet \,\,$}
\setlist[itemize,3]{label=$\,\bullet \,\,$}
\setlist[itemize,4]{label=$\,\bullet \,\,$}
\setlist[itemize,5]{label=$\,\bullet \,\,$}
\setlist[itemize,6]{label=$\,\bullet \,\,$}
\setlist[itemize,7]{label=$\,\bullet \,\,$}
\setlist[itemize,8]{label=$\,\bullet \,\,$}
\setlist[itemize,9]{label=$\,\bullet \,\,$}
\let\realverbatim=\verbatim
\let\realendverbatim=\endverbatim
\renewcommand\verbatim{\par\addvspace{6pt plus 2pt minus 1pt}\realverbatim}
\renewcommand\endverbatim{\realendverbatim\addvspace{6pt plus 2pt minus 1pt}}
         \let\leq=\leqslant
         \let\geq=\geqslant
\newsavebox{\astrutbox}
\sbox{\astrutbox}{\rule[-5pt]{0pt}{20pt}}
\newcommand\etal{\mbox{\textit{et al.}}}
\newtheorem{thm}{Theorem}[section]
\newdefinition{defn}[thm]{Definition}
\newdefinition{example}[thm]{Example}
\newdefinition{lem}[thm]{Lemma}
\newdefinition{rem}[thm]{Remark}
\title[Physically Feasible Decomposition of Engino\textsuperscript{$^{\circledR}$}
Toy Models]{Physically Feasible Decomposition of Engino\textsuperscript{$^{\circledR}$}
Toy Models: A Graph Theoretic Approach}
\author[E.N. Antoniou et al.]{%
 E. N.  ANTONIOU$\,^1$,  A.  ARA\'UJO$\,^2$, 
 M. D. BUSTAMANTE$\,^3$ and
 A. GIBALI$\,^4$
}
\affiliation{%
  $^1\,${Department of Information Technology, Alexander Technological Educational Institute of Thessaloniki, Greece}\\
    email\textup{\nocorr: \texttt{antoniou@it.teithe.gr}}\\
  $^2\,${CMUC, Department of Mathematics, University of Coimbra, Portugal}\\
    email\textup{\nocorr: \texttt{alma@mat.uc.pt}}\\
  $^3\,${Institute for Discovery, School of Mathematics and Statistics, University College Dublin, Belfield D4, Ireland}\\
    email\textup{\nocorr: \texttt{miguel.bustamante@ucd.ie}}\\
  $^4\,${Department of Mathematics, ORT Braude College, Israel}\\
    email\textup{\nocorr: \texttt{avivg@braude.ac.il}}}
\date{27 July 2017}
\begin{document}

\label{firstpage}
\maketitle

\begin{abstract}
During the 125th European Study Group with Industry held in Limassol,
Cyprus, 5-9 December 2016, one of the participating companies, Engino.net 
Ltd, posed a very interesting challenge to the members of the study
group. Engino.net Ltd is a Cypriot company, founded in
2004, that produces a series of toy sets -- the Engino$^{\circledR}$ toy sets --  consisting of a number
of building blocks which can be assembled by pupils to compose toy
models. Depending on the contents of a particular toy set, the company
has developed a number of models that can be built utilizing the blocks
present in the set, however the production of a step-by-step assembly
manual for each model could only be done manually. The goal of the
challenge posed by the company was to implement a procedure to automatically
generate the assembly instructions for a given toy. In the present
paper we propose a graph-theoretic approach to model the problem and
provide a series of results to solve it by employing modified versions
of well established algorithms in graph theory. An algorithmic procedure
to obtain a hierarchical, physically feasible decomposition of a given toy
model, from which a series of step-by-step assembly instructions can be 
recovered, is proposed. 
\end{abstract}

\begin{keywords}
62P30, 05C90, 68R10, 94C15.
\end{keywords}

\section{Introduction}\label{sec:intro}

Engino$^{\circledR}$ toy models are created by assembling small blocks
or bricks together, with the purpose of helping pupils build technological
models creatively and easily so that they can experiment and learn
about science and technology in a playful way. Each of the toy sets
produced by Engino.net Ltd has a specific number of blocks
that can be assembled into many different models. It has been observed
that the creative potential of each toy set increases exponentially
as the number of blocks in the set increase. This is due to the patented
design of the Engino$^{\circledR}$ blocks that allow connectivity
on many directions in three-dimensional space simultaneously.

To demonstrate the creative potential of its toy sets, the company
has developed a large number of toy models that can be built using the
contents of the set. The ingredients and the connections required
to obtain each particular toy model has been recorded in a database
system. Despite the detailed recording, the production of step-by-step
instructions for the assembly of a particular toy model has been
proved to be a tedious task that has to be accomplished manually.
This is mainly due to the three-dimensional nature of the models and
the complexity of the interconnections between the blocks, which in
many cases impose a particular order in the steps that have to be
taken to assembly the structure. The goal of the challenge posed by
the company during the 125th European Study Group with Industry was
the development of an automatic procedure able to produce step-by-step
assembly instructions manual for every toy model that has been recorded
in the company's databases. 

To accomplish this task, we propose a graph-theoretic approach. Given
a toy model, we associate with it a directed graph whose vertices
correspond to the building blocks of the model and whose edges represent
physical connections between two blocks (see \cite{Peysakhov} and references 
therein). Moreover, in order to partially capture the actual geometry of the 
toy model, every edge of the graph is labeled with a vector showing the 
direction of the underlying physical connection in 3D space. This labeling 
of the edges provides an adequate description of the geometry of the model, 
for the purposes of our application. It should be however noted, that the exact
geometry of the (possibly multiple) connections between individual blocks of
a particular model has been recorded in full detail and this information is
available at the final stage of the assembly instructions generation.

With this setup, in order to produce the assembly instructions of
a given model, we first apply the reverse process recursively.
Given a description of a toy model, and hence its associated graph,
we develop a method to break it apart into clusters of blocks in a
manner that is physically possible. In what follows we call this procedure
a \emph{Physically Feasible Decomposition (PFD)} of the model. The
result of such a decomposition is a collection of sub-models or components
on which the method can be recursively applied until no further decompositions
are possible. Thus, a characterization of PFD of a model is of fundamental importance in the decomposition procedure. The outcome of this separation process is a hierarchical
tree structure of components, whose nodes can traversed in a postorder fashion, to generate an ordered sequence of nodes, which in turn dictate a series of step-by-step assembly instructions. 

The problem of determining a series of steps required to decompose a complex structure into its constituent components has been the subject of several studies dating back to the 1980s. This class of problems is termed \emph{disassembly sequencing} and depending on the nature of the underlying structure, a number of different approaches have been employed (see \cite{Lambert2003} for an extensive survey). The motivation behind the study of disassembly sequencing originates mainly from the fact that by reversing the steps of a disassembly sequence, one can obtain an assembly procedure of the structure under study. In this respect, disassembly sequences are closely related to the automated generation of assembly instructions of complex structures (see for instance \cite{Agrawala2003,Li2008,Hsu2011}). 

The procedure proposed in the present paper can be compared to the one presented in \cite{Li2008} for the computation of a hierarchical explosion graph. Contrary to the approach used in \cite{Agrawala2003,Li2008} for the construction of the explosion graph, which detaches individual parts one-by-one from the structure, and in turn applies a search strategy for the extraction of the hierarchy of components, our method produces directly a physically feasible decomposition into components along a given spatial direction. As shown in Section \ref{sec:MaximalPFD}, a maximal physically feasible decomposition can be obtained using well known linear-time algorithms and the recursive application of this procedure results in a hierarchical decomposition which is comparable to the hierarchical explosion graph in \cite{Li2008}. 

The contents of the paper are organized as follows: In Section \ref{sec:Prerequisites}, we briefly
recall some basic concepts and facts from graph theory required for
the development of our results in the sequel. In the subsequent section,
we present the proposed graph theoretic framework and through a series
of motivating examples we introduce the notion of a \emph{Physically
Feasible Decomposition (PFD)} of a toy model. In the same section,
we also define the \emph{Component Connectivity Graph (CCG)} implied
by the removal of a set of edges of the model's graph and show that
such a removal gives rise to a PFD if and only if the corresponding
CCG is a directed acyclic graph. In Section \ref{sec:MaximalPFD}, we define maximal PFDs
along a given direction and show that such decompositions can be obtained
by applying well established, linear-time, algorithms used
for the discovery of strongly connected components in directed graphs.
In Section \ref{sec:HierarchicalPFD} we outline an algorithmic procedure 
to obtain a hierarchical decomposition of a given toy model, using as intermediate steps
for such a decomposition, maximal PFDs along appropriately chosen spatial directions. 
Moreover, at the end of section \ref{sec:HierarchicalPFD}, the resulting hierarchical 
decomposition of the model is utilized to recover a series of assembly instructions.
Finally, in Section \ref{sec:Conclusions} we review and summarize our results. 

\section{\label{sec:Prerequisites}Graph Theory Prerequisites}

In this section, we review a number of definitions and facts from graph
theory that will be instrumental in the sequel. Most of these definitions
and results can be found in \cite{BangGutin2008,BondyAndMurty1976}.

A \emph{directed graph} $G$, denoted by $G(V,E)$, is an ordered
pair of sets $(V,E)$ where:
\begin{itemize}
\item $V$ is the set \emph{vertices} or nodes of $G$;
\item $E$ is the set of \emph{directed edges} consisting of directed pairs
$(u,v)$, where $u, v \in V$.
\end{itemize}
Moreover, if $E$ is allowed to be a multiset instead of a set, then
$G(V,E)$ is a \emph{directed multigraph}. On the other hand, if pairs of the form $(v,v)$, (called \emph{loops}) are
not allowed in $E$, then $G(V,E)$ is a \emph{directed simple graph}. Similar
definitions can be given in case the edge set (multiset), has as elements
undirected pairs of vertices. In such a case the (multi)graph is called
\emph{undirected}.

A graph $G_{1}(V_{1},E_{1})$ is a \emph{subgraph} of a given graph $G(V,E)$
if $V_{1}\subseteq V$ and $E_{1}\subseteq E$ consists exclusively
of edges having both its endpoints in $V_{1}$. Moreover, for $V_{1}\subseteq V$,
we define the\emph{ induced subgraph} $G[V_{1}]$ as the subgraph
of $G(V,E)$, whose vertex set is $V_{1}$ and its edge set is the
set of all edges of $E$, having both their endpoints in $V_{1}$.

In a directed graph $G(V,E)$, a \emph{directed (resp. undirected)
path} of length $k$, starting from $v_{0}$ and ending to $v_{k}$,
is a sequence of vertices $v_{0},v_{1},\ldots,v_{k}$, such that$(v_{i},v_{i+1})\in E$
($resp.(v_{i},v_{i+1})\in E$ or $(v_{i+1},v_{i})\in E$), for all
$0\leq i<k$. In case $v_{0}=v_{k}$ and $k>0$ the path is called
a \emph{directed (resp. undirected) cycle}. A vertex $t\in V$ is
said to be \emph{reachable} from $s\in V$, if there exists a directed
path from $s$ and to $t$. 

A directed graph $G(V,E)$ is set to be a \emph{Directed Acyclic Graph (DAG)}, if it contains no directed cycles, or equivalently, if there
exists no vertex in $V$ which is non-trivially reachable from itself.
A \emph{topological ordering} of the vertices of a directed graph
$G(V,E)$ is a total ordering of its vertices $v_{1},v_{2},\ldots,v_{n}$,
such that for all $(v_{i},v_{j})\in E$, $i\leq j$ holds. 
\begin{thm}
A directed graph $G(V,E)$ is acyclic if and only if a topological
ordering of its vertices exists.
\end{thm}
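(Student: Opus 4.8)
The statement is an equivalence, so the plan is to prove the two implications separately. For the direction asserting that a topological ordering forces acyclicity, I would argue by contradiction. Suppose the vertices admit a total order $v_1, \ldots, v_n$ with $i \le j$ whenever $(v_i, v_j) \in E$, yet $G$ contains a directed cycle $v_{i_0}, v_{i_1}, \ldots, v_{i_k} = v_{i_0}$ with $k > 0$. Applying the ordering condition to each edge of the cycle yields $i_0 \le i_1 \le \cdots \le i_k = i_0$, so every inequality is forced to be an equality and all the vertices of the cycle occupy the same position. Since a total ordering assigns distinct positions to distinct vertices, the cycle visits a single vertex, which is impossible for a cycle of positive length; this contradiction establishes the implication.

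For the converse --- that every DAG admits a topological ordering --- I would proceed by induction on the number of vertices $n$, the base case $n \le 1$ being immediate. The inductive step rests on the following key lemma: every nonempty finite DAG contains a \emph{source}, that is, a vertex with no incoming edges. Granting the lemma, let $s$ be such a source; assign it position $1$, and consider the induced subgraph $G[V \setminus \{s\}]$. Removing a vertex cannot create a directed cycle, so this subgraph is again a DAG on $n-1$ vertices and, by the induction hypothesis, admits a topological ordering which I place in positions $2, \ldots, n$. It remains to check that every edge respects the combined ordering: edges inside $V \setminus \{s\}$ do so by the induction hypothesis, while $s$ has no incoming edges and its outgoing edges point to vertices in positions $\ge 2 > 1$, as required.

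The crux of the argument, and the step I expect to be the main obstacle, is the source lemma, whose proof uses finiteness and acyclicity together. I would establish it by contradiction: if every vertex had an incoming edge, then starting from any vertex I could repeatedly follow an edge backward, producing an arbitrarily long sequence of vertices. Because $V$ is finite, this sequence must eventually repeat a vertex, and the segment between two occurrences of that vertex is a directed cycle, contradicting the assumption that $G$ is acyclic. Hence a source must exist. Symmetrically, one could build the ordering from the back using a \emph{sink}, a vertex with no outgoing edges; either choice works. The only points requiring care are the finiteness hypothesis, which is essential for the pigeonhole step, and the observation that deleting a vertex preserves the DAG property, both of which are straightforward.
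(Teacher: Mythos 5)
The paper never proves this statement: it appears in Section~2 as a prerequisite, quoted from the cited graph-theory references, so there is no proof of record to compare yours against. Your argument is the standard textbook one and is essentially correct: the forward implication by chaining the inequalities $i_0 \le i_1 \le \cdots \le i_k = i_0$ around a hypothetical cycle, and the converse by induction on the number of vertices using the key lemma that every nonempty finite DAG has a source, itself proved by the backward-walk and pigeonhole argument. You also correctly isolate the two places where finiteness and acyclicity are genuinely needed. One small point deserves care: the paper's definition of topological ordering uses the non-strict condition $i \le j$ for every edge $(v_i,v_j)$, under which a loop $(v,v)$ admits a topological ordering yet is a directed cycle of positive length visiting a single vertex --- exactly the configuration your final step declares impossible. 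So your first implication implicitly assumes the graph is loop-free (equivalently, one should adopt the strict condition $i < j$). This is harmless in context, since the model graphs and component connectivity graphs in the paper are simple, but the assumption should be stated explicitly.
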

A directed graph $G(V,E)$ is called\emph{ strongly (resp. weakly)
connected} if for every pair of vertices $u\in V$, $v\in V$, there
exists a directed (resp. undirected) path from $u$ to $v$. A maximal
strongly (resp. weakly) connected subgraph of a graph, i.e. a strongly
connected subgraph which is not a proper subgraph of any other strongly
connected subgraph, is called a \emph{strongly (resp. weakly) connected
component}. 

The \emph{condensation} of a directed graph $G(V,E)$ is a directed
graph $G_{co}(V_{co},E_{co})$, with:
\begin{itemize}
\item $V_{co}=\{C_{i}:C_{i}\textrm{ is a strongly connected component of }G(V,E)\}$; 
\item $E_{co}=\{(C_{i},C_{j}):\exists(u,v)\in E\textrm{ such that }u\in C_{i},v\in C_{j}\}$.
\end{itemize}
\begin{thm}
The condensation of any directed graph $G(V,E)$ is a directed acyclic
graph.
\end{thm}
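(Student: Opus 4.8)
The plan is to argue by contradiction, exploiting the maximality built into the definition of a strongly connected component. Suppose $G_{co}$ is not acyclic. By the characterization of a DAG recalled above, some vertex of $G_{co}$ — that is, some strongly connected component $C_{i_0}$ of $G$ — is non-trivially reachable from itself, so there is a closed directed walk $C_{i_0} \to C_{i_1} \to \cdots \to C_{i_m} = C_{i_0}$ in $G_{co}$ with $m>0$. Let $W = C_{i_0} \cup \cdots \cup C_{i_{m-1}} \subseteq V$ be the union of the vertex sets of the components this walk visits. My goal will be to show that $G[W]$ is strongly connected, which contradicts the maximality of each $C_{i_j}$ as soon as the walk visits two or more distinct components.

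First I would unpack each edge of the walk in $G_{co}$ back into an edge of $G$. By definition of $E_{co}$, every edge $(C_{i_j}, C_{i_{j+1}})$ arises from an actual edge $(u_j, v_{j+1}) \in E$ with $u_j \in C_{i_j}$ and $v_{j+1} \in C_{i_{j+1}}$. Next I would invoke the defining property of the components: within each $C_{i_j}$ any vertex reaches any other by a directed path. Concatenating these pieces — inside $C_{i_j}$ to the tail $u_j$, across the edge $(u_j, v_{j+1})$, inside $C_{i_{j+1}}$ to the next tail, and so on around the closed walk — shows that an arbitrary vertex of any $C_{i_a}$ can reach an arbitrary vertex of any $C_{i_b}$ by traversing the walk forward from $a$ to $b$, and can be reached from it by continuing forward from $b$ all the way around back to $a$. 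Hence every pair of vertices of $W$ is mutually reachable, so $G[W]$ is strongly connected.

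This is the crux. If the walk genuinely involves two or more distinct components, then $G[W]$ is a strongly connected subgraph properly containing some $C_{i_j}$, contradicting the fact that $C_{i_j}$ is a \emph{maximal} strongly connected subgraph. The only remaining possibility is a closed walk all of whose edges are loops $(C_{i_0}, C_{i_0})$ in $G_{co}$, arising from edges of $E$ internal to the single component $C_{i_0}$. I would dispose of this by the convention that the condensation is taken as a loopless (simple) directed graph, so such internal edges do not contribute edges of $G_{co}$ and cannot produce a one-vertex cycle.

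The main obstacle I anticipate is not conceptual but bookkeeping: stating the concatenation of paths cleanly so that the \emph{forward-around-the-walk} reachability between any two components is unambiguous, and making explicit why visiting at least two components forces the strict containment $C_{i_j} \subsetneq W$. Some care is also needed for the length-one (loop) case, since the DAG definition adopted here counts a loop as a directed cycle; this is the one place where the intended loopless reading of the condensation must be invoked explicitly.
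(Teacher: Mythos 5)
The paper never proves this statement: it appears in Section~2 (Graph Theory Prerequisites) as a known fact, implicitly cited to the textbooks of Bang-Jensen--Gutin and Bondy--Murty, so there is no in-paper proof to compare yours against. Your argument is the standard one and it is correct: a directed cycle of $G_{co}$ through at least two distinct components unpacks into edges of $G$; mutual reachability inside each strongly connected component lets you concatenate paths around the cycle, so the union $W$ of the visited components induces a strongly connected subgraph of $G$; and since distinct components are disjoint, $W$ properly contains each visited component, contradicting maximality. Your closing remark about loops is not mere bookkeeping and is worth keeping: as literally written, the paper's definition of $E_{co}$ does not require $C_i \neq C_j$, so any component containing an internal edge of $G$ would carry a loop in $G_{co}$, and since the paper's definition of a directed cycle admits length $k=1$, the theorem would actually be false under that literal reading. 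The intended (standard) convention is that condensation edges join distinct components only --- a restriction the paper does impose explicitly in the analogous definition of the Component Connectivity Graph, where $i \neq j$ is required --- and your proof correctly isolates this as the one point where that convention must be invoked.
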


A \emph{tree} is an undirected graph in which every pair of vertices
is connected via a unique path. A \emph{rooted tree} is a tree having
one particular vertex designated as its \emph{root node}. An \emph{ordered
tree} is a rooted tree in which an ordering is specified for the children
of each vertex. A \emph{binary tree} is a rooted tree in which every
vertex has at most two children. A binary tree is \emph{full} if every
node has either zero or two children. 

\section{\label{sec:PFD}Physically Feasible Decomposition of Toy Models}

We now present the proposed framework for the solution of the decomposition
problem discussed above based on a graph-theoretic approach. Given
a toy model $\mathscr{M}$, we associate to it a directed graph $G(V,E)$,
where:
\begin{itemize}
\item $V=\{v_{1},v_{2},\ldots,v_{n}\}$ is the vertex set of $G$ with each
vertex $v_{i}$ corresponding to a block of $\mathscr{M}$;
\item $E=\{(u,v): u, v\in V\}$
is the edge set of $G$ with each directed edge representing a connection
between two blocks of the model. 
\end{itemize}
Every physical connection between two blocks of the model can be aligned
in space to one particular direction vector, chosen out of a finite
collection of directions. For instance, if a model uses only perpendicular
connections between its blocks in 3D space, we can identify three
direction vectors $\hat{i},\hat{j},\hat{k}$ along which all connections
can be aligned. A connection between two blocks of the model $u,v$,
aligned to a particular direction $\hat{d}$ in physical space, gives
rise to a directed edge $(u,v)\in E$, if the vector from $u$ to
$v$ points towards the same direction as $\hat{d}$. 

Assuming that all the connections of the model $\mathscr{M}$ correspond
to $p$, not necessarily orthogonal, distinct spatial directions $\hat{d_{i}}$, we can partition
the edge set $E$ into a family of $p$ mutually disjoint sets $E_{i}$,
$i=1,2,\ldots,p$, each of which contains the edges associated to
connections sharing the same direction in space. It should be noted that the physical connections between the blocks of the toy model $\mathscr{M}$, are assumed to be fixed, meaning that the resulting construction is rigid and contains no moving or rotating parts. Thus, the only possible way to separate two connected blocks is to apply opposite forces along the physical direction $\hat{d_{i}}$ associated to the connection, provided that resulting the displacement is physically feasible in the sense described in the paragraph that follows. At first this may seem to be a rather restrictive assumption with respect to the types of toy models it allows to be constructed, as there are many actual toy compositions in Engino's collection involving moving or rotating parts. However, as discussed with representatives of the company during the 125th ESGI meeting, in most such cases the moving or rotating parts can either be considered as separate rigid submodels (e.g. a two wheel and axle submodel), or their connection to the rest of the model is non-fixed (e.g. a pinned joint), allowing them to be detached from it by pulling them along some non-blocking direction. In view of this setup, we propose the following principle to describe the conditions under which a disconnection of two blocks is physically possible.

\begin{description}
\item [{Physically Feasible Disconnection of two blocks:} ] In order to
disconnect two blocks corresponding to vertices $v_{1},v_{2}\in V$,
connected via an edge $(v_{1},v_{2})\in E$ aligned to a given spatial
direction $\hat{d_{i}}$, the blocks $v_{1},v_{2}$ must be able to
be displaced along the directions $-\hat{d_{i}},\hat{d}_{i}$ respectively,
when appropriate opposite forces are applied on the blocks. 
\end{description}

The idea behind the above principle is illustrated in the following
example. 
\begin{example}
Consider the blocks shown in Figure \ref{fig:TwoPFDBlocks}. 
\begin{figure}[h]
\begin{centering}
\includegraphics[scale=0.7]{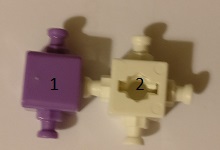}\  \includegraphics[scale=0.7]{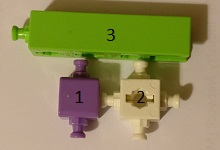} 
\par\end{centering}
\caption{Two blocks that can be disconnected (left); blocks $1,2$ cannot be disconnected (right).}
\label{fig:TwoPFDBlocks} 
\end{figure}
In the left side, the blocks $1,2$ can be disconnected using two opposite horizontal forces, since their application on the two blocks will result in displacements
along the horizontal direction. If a third block is added as shown in right side of Figure 
\ref{fig:TwoPFDBlocks}, then the blocks $1,2$ cannot be disconnected by applying
on them opposite horizontal forces, since their displacement is blocked
by their vertical connections to the block number 3.
\end{example}

The idea of disconnecting two blocks of the model in a physically
feasible manner can be easily generalized to describe the corresponding
decomposition of a model into two submodels. In general, the removal
of a set of edges along a given direction may result into a decomposition
of the graph of the model into two or more weakly connected components.
However, not all such removals can be actually applied on the physical
model to decompose it into two or more submodels. This is due to the
fact that in certain cases the physical displacement of the resulting
weakly connected components of the model is blocked by other physical
connections, due to the presence of edges not removed in the current
phase.

We can extend the principle of Physically Feasible Disconnection,
introduced above, to the case of the separation of two weakly connected
components.

\begin{description}
\item [{Physically Feasible Decomposition into two components:}] The removal
of a set of edges, aligned to a particular space direction $\hat{d_{i}}$,
is physically feasible, if and only if the two resulting weakly connected
components are able to be displaced along the directions $-\hat{d_{i}},\hat{d}_{i}$
respectively, when appropriate opposite forces are applied on these
blocks. 
\end{description}

For brevity, in what follows, we shall call this decomposition a
\emph{2\textendash PFD of the model}. The above decomposition is equivalent
to assuming that, during the separation process, each of the two weakly
connected components behaves like a single block, but unlike the single
blocks case, it is possible to have multiple parallel connections
between them. 

Our next goal is to obtain a characterization of 2\textendash PFD's 
that are possible along a given direction. In this respect it is instrumental
to introduce the notion of the \emph{Component Connectivity Graph}
of a model $\mathscr{M}$, implied by the removal of a set of co-linear
edges, which provides a higher level view of the decomposition.
\begin{defn}[Component Connectivity Graph]
 Let $G(V,E)$ be the graph associated to a model $\mathscr{M}$
and $\bar{E_{i}}\subseteq E_{i}$ a non-empty set of edges, where
$E_{i}$ is the set of all edges of $G(V,E)$ along the spatial direction
$\hat{d}_{i}$. The \emph{Component Connectivity Graph (CCG)}, implied
by the removal of the edges of $\bar{E_{i}}$, is a directed graph,
$G_{C}(V_{C},E_{C})$, whose vertices are the weakly connected components
$C_{i}$, $i=1,2,\ldots,k$, into which $G(V,E)$ is partitioned with
the removal of the edges of $\bar{E_{i}}$. Two components $C_{i},C_{j}$
are connected via an edge $(C_{i},C_{j})\in E_{C}$ if and only if
$i\neq j$ and there exists an edge $(v,u)\in\bar{E_{i}}$ with $v\in C_{i}$
and $u\in C_{j}$.
\end{defn}

We should note that according to the above definition the CCG implied
by the removal of a set of edges $\bar{E_{i}}\subseteq E_{i}$ is
a simple directed graph, since by construction it cannot contain neither
loops nor parallel edges sharing the same source and target vertices.
The above ideas are illustrated in the following example.

\begin{example}
\label{exa:2PFDvsNoPFD}Consider the model shown in Figure \ref{fig:Model1_Photo}. The graph $G(V,E)$ of the model is depicted in Figure \ref{fig:Model1_Graph}, where $\hat{d}_{1},\hat{d}_{2}$ are respectively the horizontal (left
- right) and vertical (bottom - up) direction vectors.
\begin{figure}[h]
\begin{centering}
\includegraphics[scale=0.3]{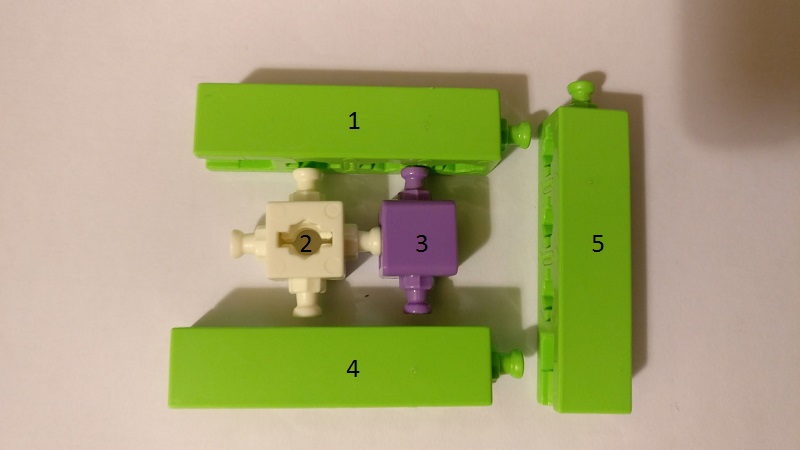} 
\par\end{centering}
\caption{A picture of the actual model.}
\label{fig:Model1_Photo} 
\end{figure}
\begin{figure}[h]
\begin{centering}
\begin{center}
\begin{tikzpicture}
\SetGraphUnit{2}
  \GraphInit[vstyle=Normal]

  \Vertex[L=$1$]{V1}
  \SOWE[L=$2$](V1){V2}
  \SOEA[L=$4$](V2){V4}
  \SOEA[L=$3$](V1){V3}
  \EA[L=$5$](V3){V5}

  \Edge[label=$\hat{d}_1$,style=->](V2)(V3)
  \Edge[label=$\hat{d}_1$,style=->](V1)(V5)
  \Edge[label=$\hat{d}_1$,style=->](V4)(V5)
  \Edge[label=$\hat{d}_2$,style=->](V2)(V1)
  \Edge[label=$\hat{d}_2$,style=->](V4)(V2)
  \Edge[label=$\hat{d}_2$,style=->](V3)(V1)
  \Edge[label=$\hat{d}_2$,style=->](V4)(V3)
\end{tikzpicture}
\par\end{center}
\par\end{centering}
\caption{The graph $G(V,E)$ of the model.}
\label{fig:Model1_Graph} 
\end{figure}
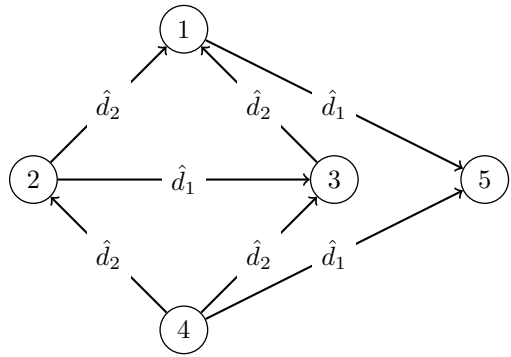

If we remove all edges along the horizontal direction, i.e. edges
$(2,3)$, $(1,5)$ and $(4,5)$, the graph is decomposed into two
weakly connected components $C_{1}=\{1,2,3,4\}$ and $C_{2}=\{5\}$
as shown in Figure \ref{fig:Model1_Graph_HorRem}, and the implied CCG by this removal of edges is shown in Figure \ref{fig:Model1_CCG_2PFD}.
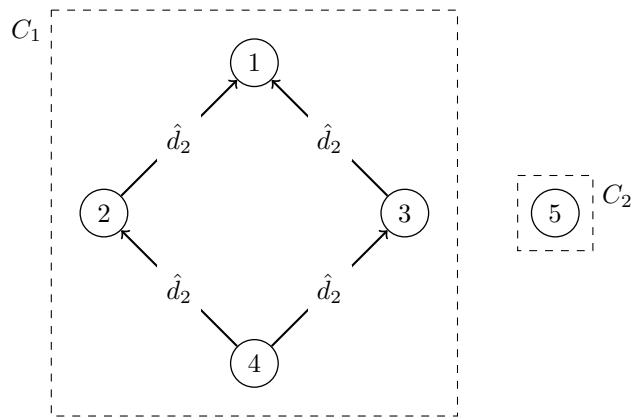
\begin{figure}[h]
\begin{centering}
\begin{center}
\begin{tikzpicture}
\SetGraphUnit{2}
  \GraphInit[vstyle=Normal]

\draw[dashed] (-2.7,0.7) -- (2.7,0.7) -- (2.7,-4.7) 
		-- (-2.7,-4.7) -- (-2.7,0.7) node[anchor=north east] {$C_1$};

\draw[dashed] (3.5,-1.5) -- (4.5,-1.5) node[anchor=north west] {$C_2$} -- (4.5,-2.5) 
		-- (3.5,-2.5) -- (3.5,-1.5);

  \Vertex[L=$1$]{V1}
  \SOWE[L=$2$](V1){V2}
  \SOEA[L=$4$](V2){V4}
  \SOEA[L=$3$](V1){V3}
  \EA[L=$5$](V3){V5}

  \Edge[label=$\hat{d}_2$,style=->](V2)(V1)
  \Edge[label=$\hat{d}_2$,style=->](V4)(V2)
  \Edge[label=$\hat{d}_2$,style=->](V3)(V1)
  \Edge[label=$\hat{d}_2$,style=->](V4)(V3)
\end{tikzpicture}
\par\end{center}
\par\end{centering}
\caption{The graph $G(V,E)$ after the removal of all edges along $\hat{d}_{1}$.}
\label{fig:Model1_Graph_HorRem} 
\end{figure}
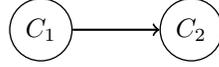
\begin{figure}[h]
\begin{centering}
\begin{center}
\begin{tikzpicture}
\SetGraphUnit{2}
  \GraphInit[vstyle=Normal]

  \Vertex[L=$C_1$]{C1}
  \EA[L=$C_2$](C1){C2}

  \Edge[style=->](C1)(C2)

\end{tikzpicture}
\par\end{center}
\par\end{centering}
\caption{The CCG of the model after the removal of all edges along $\hat{d}_{1}$.}
\label{fig:Model1_CCG_2PFD} 
\end{figure}
Clearly, nothing prevents the displacement of the two components $C_{1},C_{2}$
from moving towards $-\hat{d}_{1},\hat{d}_{1}$ respectively, when
appropriate horizontal forces are applied on them. Thus, the removal
of all horizontal edges implies a 2\textendash PFD of the model. 

On the other hand, if we choose to remove all edges along $\hat{d}_{2}$,
we end up with the weakly connected components $C_{1}^{\prime},C_{2}^{\prime}$
shown in Figure \ref{fig:Model1_Graph_VerRem}, 
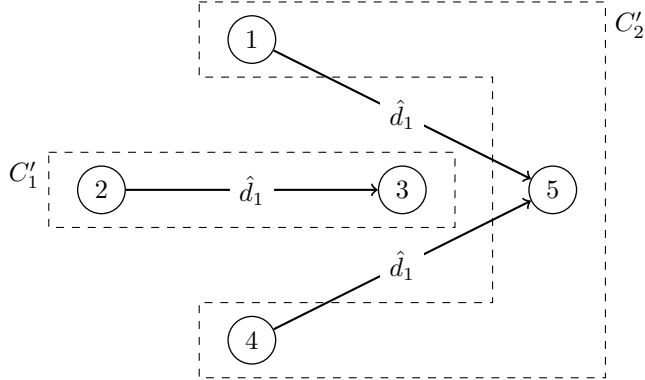
\begin{figure}[h]
\begin{centering}
\begin{center}
\begin{tikzpicture}
\SetGraphUnit{2}
  \GraphInit[vstyle=Normal]

\draw[dashed] (-2.7,-1.5) -- (2.7,-1.5) -- (2.7,-2.5) 
		-- (-2.7,-2.5) -- (-2.7,-1.5) 
		node[anchor=north east] {$C^\prime_1$};

\draw[dashed] (-0.7,0.5) -- (4.7,0.5) 
		node[anchor=north west] {$C^\prime_2$}
		-- (4.7,-4.5) -- (-0.7,-4.5)
		-- (-0.7,-3.5) -- (3.2,-3.5)
		-- (3.2,-0.5) -- (-0.7,-0.5) -- (-0.7,0.5);

  \Vertex[L=$1$]{V1}
  \SOWE[L=$2$](V1){V2}
  \SOEA[L=$4$](V2){V4}
  \SOEA[L=$3$](V1){V3}
  \EA[L=$5$](V3){V5}

  \Edge[label=$\hat{d}_1$,style=->](V2)(V3)
  \Edge[label=$\hat{d}_1$,style=->](V1)(V5)
  \Edge[label=$\hat{d}_1$,style=->](V4)(V5)
\end{tikzpicture}
\par\end{center}
\par\end{centering}
\caption{The graph $G(V,E)$ after the removal of all edges along $\hat{d}_{2}$.}
\label{fig:Model1_Graph_VerRem} 
\end{figure}
and the corresponding CCG is the one in Figure \ref{fig:Model1_CCG_NoPFD}.
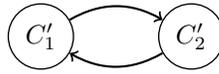
\begin{figure}[h]
\begin{centering}
\begin{center}
\begin{tikzpicture}
\SetGraphUnit{2}
  \GraphInit[vstyle=Normal]

  \Vertex[L=$C^\prime_1$]{C1}
  \EA[L=$C^\prime_2$](C1){C2}

  \Edge[style={bend left, ->}](C1)(C2)
  \Edge[style={bend left, ->}](C2)(C1)

\end{tikzpicture}
\par\end{center}
\par\end{centering}
\caption{The CCG of the model after the removal of all edges along $\hat{d}_{2}$.}
\label{fig:Model1_CCG_NoPFD} 
\end{figure}
Despite the fact that the removal of the four vertical edges separates
the graph into two weakly connected components, it is clear that such
a decomposition is not physically feasible. Obviously, the blocks
$2,3$ of $C_{1}^{\prime}$ cannot be displaced vertically, because
they are ``trapped'' between the components $1,4$ of $C_{2}^{\prime}$.
\end{example}

In view of the decomposition along the spatial direction $\hat{d}_1$ shown in Example \ref{exa:2PFDvsNoPFD}, it becomes apparent that not all the edges removed correspond to a physically feasible disconnection of two blocks. This is the case with the edge $(2,3)$ in the graph of Example \ref{exa:2PFDvsNoPFD}, which does not appear in Figure  \ref{fig:Model1_Graph_HorRem} due to its removal. Despite the fact that this edge can be theoretically removed during the removal of all edges along $\hat{d}_1$, the blocks $2,3$ cannot be disconnected because the perpendicular connections with blocks $1,4$ obstruct their horizontal displacement. On the other hand, the edges $(1,5)$, $(4,5)$ obviously contribute actively on the decomposition of the graph into two components $C_{1}$ and $C_{2}$, shown in Figure \ref{fig:Model1_Graph_HorRem_PhysRemovableOnly}. The distinguishing property between these two types of edges is that the former has both its endpoints on the same weakly connected component after the removal of all edges along $\hat{d}_1$, while each of the latter type of edges have their start and end points lying on distinct components. The edges that actively contribute to the formation of weakly connected components of a given CCG, will be called \emph{physically removable} for the given CCG. A maximal subset of physically removable edges, along a given spatial direction, can be successfully computed using the technique presented in Section \ref{sec:MaximalPFD}.

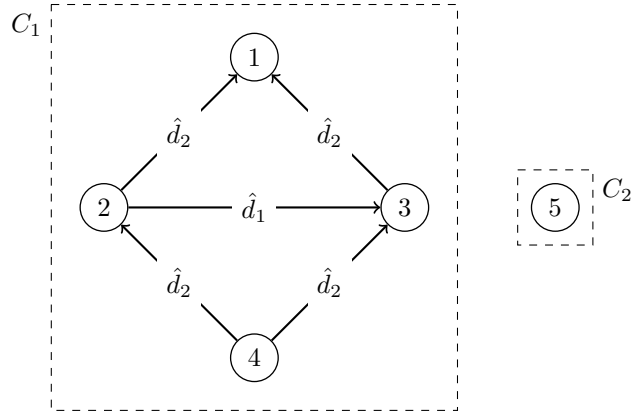
\begin{figure}[h]
	\begin{centering}
		\begin{center}
			\begin{tikzpicture}
			\SetGraphUnit{2}
			\GraphInit[vstyle=Normal]
			
			\draw[dashed] (-2.7,0.7) -- (2.7,0.7) -- (2.7,-4.7) 
			-- (-2.7,-4.7) -- (-2.7,0.7) node[anchor=north east] {$C_1$};
			
			\draw[dashed] (3.5,-1.5) -- (4.5,-1.5) node[anchor=north west] {$C_2$} -- (4.5,-2.5) 
			-- (3.5,-2.5) -- (3.5,-1.5);
			
			\Vertex[L=$1$]{V1}
			\SOWE[L=$2$](V1){V2}
			\SOEA[L=$4$](V2){V4}
			\SOEA[L=$3$](V1){V3}
			\EA[L=$5$](V3){V5}

			\Edge[label=$\hat{d}_1$,style=->](V2)(V3)
									
			\Edge[label=$\hat{d}_2$,style=->](V2)(V1)
			\Edge[label=$\hat{d}_2$,style=->](V4)(V2)
			\Edge[label=$\hat{d}_2$,style=->](V3)(V1)
			\Edge[label=$\hat{d}_2$,style=->](V4)(V3)
			\end{tikzpicture}
			\par\end{center}
		\par\end{centering}
	\caption{The graph $G(V,E)$ after the removal of all edges along $\hat{d}_{1}$ which are physically removable.}
	\label{fig:Model1_Graph_HorRem_PhysRemovableOnly} 
\end{figure}

Proceeding a step further, we can provide a characterization of 2\textendash PFD's in terms of
a particular property of the edges connecting the weakly connected components in the corresponding CCG.

\begin{lem}
\label{lem:2-PFDCondition}Let $\mathscr{M}$ be a toy model with
the associated directed graph $G(V,E)$. Assume that the removal
of a non-empty set of edges $\bar{E_{i}}\subseteq E_{i}$, where $E_{i}$
is the set of all edges of $G(V,E)$ along the direction $\hat{d}_{i}$,
gives rise to the CCG, $G_{C}(V_{C},E_{C})$, where $V_{C}=\{C_{1},C_{2}\}$.
Then, the removal of the edges of $\bar{E_{i}}$ is a \emph{2\textendash PFD
of $\mathscr{M}$ if and only if $E_{C}$ contains exactly one of
the edges $(C_{1},C_{2})$, $(C_{2},C_{1})$.}
\end{lem}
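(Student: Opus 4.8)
The plan is to first reduce the statement to a combinatorial dichotomy and then read off the physical content through the Physically Feasible Disconnection principle. Since $G_C$ is a simple directed graph on the two vertices $C_1,C_2$ with no loops, its edge set satisfies $E_C\subseteq\{(C_1,C_2),(C_2,C_1)\}$. I would first observe that $E_C$ is non-empty: the components $C_1,C_2$ are by definition the weakly connected components obtained after deleting $\bar{E_i}$, so at least one deleted edge must cross between them, and by the definition of the CCG such a crossing edge induces one of $(C_1,C_2)$, $(C_2,C_1)$. I would also record the key structural fact that after the removal $C_1$ and $C_2$ share no remaining edge (otherwise they would lie in a single weak component); hence every connection of $\mathscr{M}$ joining the two components belongs to $\bar{E_i}$ and is therefore aligned to the single direction $\hat{d}_i$. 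With these two remarks, the condition ``$E_C$ contains exactly one of $(C_1,C_2),(C_2,C_1)$'' becomes equivalent to ``$E_C$ does not contain both,'' and the lemma reduces to showing that the removal fails to be a 2--PFD precisely when both edges are present.

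Next I would translate each CCG edge into a displacement constraint. By the Physically Feasible Disconnection principle, breaking a connection $(v,u)\in\bar{E_i}$ requires displacing its source $v$ along $-\hat{d}_i$ and its target $u$ along $+\hat{d}_i$. Reading this at the level of components: an edge $(C_1,C_2)\in E_C$ arises from a crossing connection with source in $C_1$ and target in $C_2$, hence it demands that $C_1$ move along $-\hat{d}_i$ while $C_2$ moves along $+\hat{d}_i$; symmetrically $(C_2,C_1)\in E_C$ demands $C_1$ along $+\hat{d}_i$ and $C_2$ along $-\hat{d}_i$. For the sufficiency direction ($\Leftarrow$), suppose only $(C_1,C_2)$ is present (the other case being identical after relabelling). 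Then every crossing connection has its source in $C_1$ and its target in $C_2$, so all the connections to be broken are co-linear along $\hat{d}_i$ and consistently oriented; translating $C_1$ rigidly along $-\hat{d}_i$ and $C_2$ rigidly along $+\hat{d}_i$ simultaneously slides every such connection apart, and since the two components share no other connection nothing obstructs the motion. This is exactly a 2--PFD, as illustrated by Figure \ref{fig:Model1_CCG_2PFD}.

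For the necessity direction ($\Rightarrow$) I would argue by contraposition. As $E_C$ is non-empty, the only remaining possibility when ``exactly one'' fails is that both $(C_1,C_2)$ and $(C_2,C_1)$ lie in $E_C$. Then there is a connection forcing $C_1$ along $-\hat{d}_i$ and $C_2$ along $+\hat{d}_i$, and another forcing $C_1$ along $+\hat{d}_i$ and $C_2$ along $-\hat{d}_i$; no rigid displacement of the two components can satisfy both, since $C_1$ cannot be translated along $+\hat{d}_i$ and $-\hat{d}_i$ at once. Physically, a block of $C_2$ sits on the $+\hat{d}_i$ side of a block of $C_1$ while another block of $C_1$ sits on the $+\hat{d}_i$ side of a block of $C_2$, so the two components are interlocked along $\hat{d}_i$ and cannot be separated — the situation of Figure \ref{fig:Model1_CCG_NoPFD}. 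Hence the removal is not a 2--PFD, which completes the contrapositive.

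The main obstacle is not combinatorial but lies in making the ``displacement'' language of the Physically Feasible Disconnection principle precise enough to justify the two claims that do the real work: that consistently oriented, co-linear connections offer no obstruction to the opposing rigid translations (sufficiency), and that oppositely oriented connections genuinely block any separating motion (necessity). Both rest on the modelling assumptions that components move only by rigid translation along $\pm\hat{d}_i$ and that a connection aligned to $\hat{d}_i$ is undone solely by pulling its endpoints apart along that axis. I would state these explicitly and isolate, as the pivotal lemma underlying both implications, the correspondence between the direction of an edge in $E_C$ and the admissible displacement of the component incident to it.
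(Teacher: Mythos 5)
Your proposal is correct and follows essentially the same route as the paper's own proof: both establish the ``only if'' direction by observing that the simultaneous presence of $(C_1,C_2)$ and $(C_2,C_1)$ imposes contradictory displacement requirements on the two components (neither can move along $+\hat{d}_i$ and $-\hat{d}_i$ at once), and the ``if'' direction by noting that a single edge direction means all crossing connections are consistently oriented, so $C_1$ and $C_2$ can be pulled apart along $-\hat{d}_i$ and $+\hat{d}_i$ respectively. Your extra care in justifying the non-emptiness of $E_C$ via connectivity of the model and in isolating the rigid-translation modelling assumptions only makes explicit what the paper leaves implicit.
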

\begin{proof}
We first note that since $\bar{E_{i}}$ is non-empty, so is $E_{C}$.
Moreover, recall that $G_{C}(V_{C},E_{C})$ is simple, so, $E_{C}$
will either contain exactly one or both \emph{$(C_{1},C_{2})$, $(C_{2},C_{1})$.
}Assume now that $E_{C}$ contains both $(C_{1},C_{2})$ and $(C_{2},C_{1})$.
Then, due to the presence of $(C_{1},C_{2})$, in order to separate
$C_{1}$ from $C_{2}$ we should be able to displace $C_{1}$ towards
$-\hat{d}_{i}$ and $C_{2}$ towards $\hat{d}_{i}$, by applying appropriate
opposite forces on $C_{1}$ and $C_{2}$. On the other hand, due to
the presence of $(C_{2},C_{1})$, in order to accomplish the same
task, $C_{1}$ should be able to move towards $\hat{d}_{i}$ and $C_{2}$
towards $-\hat{d}_{i}$, using again appropriate opposite forces.
Obviously, neither $C_{1}$ nor $C_{2}$ can move simultaneously on
both spatial directions $-\hat{d}_{i}$, $\hat{d}_{i}$. Thus, the
removal of the edges of $\bar{E_{i}}$, is not a 2\textendash PFD,
which proves the ``only if'' part of the lemma.

Conversely, assume without loss of generality that $E_{C}$ contains
only $(C_{1},C_{2})$. This means that, in physical space, the components
$C_{1}$,$C_{2}$ are connected only on one side, leaving their
externally exposed sides free (see Figure \ref{fig:2-PFD}). Thus, removing
the edges of $\bar{E_{i}}$ connecting the vertices of $C_{1}$ to
those of $C_{2}$, will result in a 2\textendash PFD of the model,
since $C_{1}$ can be displaced towards the direction of $-\hat{d}_{i}$
and $C_{2}$ towards that of $\hat{d}_{i}$.
\begin{figure}[h]
\begin{centering}
\begin{center}
\begin{tikzpicture}
\node at (0.5,1) [] {$C_1$};
\draw (0, 0) -- (0, 2) -- (1, 2) --
        (1, 1.75) -- (0.5, 1.75) -- (0.5, 1.5) --
        (2, 1.5) -- (2, 1) -- (1, 1) --
        (1, 0.5) -- (1.5, 0.5) -- (1.5, 0) -- (0, 0) ;

\node at (4.5,1) [] {$C_2$};
\draw(3.5, 0) -- (5, 0) -- (5, 2) -- (3, 2) --
        (3, 1.75) -- (2.5, 1.75) -- (2.5, 1.5) --
        (4, 1.5) -- (4, 1) -- (3, 1) --
        (3, 0.5) -- (3.5, 0.5) -- (3.5, 0);

\draw[->, dashed] (1.25, 1.75) -- (2.4, 1.75);
\draw[->, dashed] (1.5, 0.75) -- (2.9, 0.75);
\draw[->, dashed] (1.75, 0.25) -- (3.15, 0.25);

\draw[->] (-0.2, 1) -- node[above] {$-\hat{d}_i$}  (-1, 1);
\draw[->] (5.2, 1) -- node[above] {$\hat{d}_i$}  (6, 1);
\end{tikzpicture}
\par\end{center}
\par\end{centering}
\caption{2\textendash PFD of $C_{1}$, $C_{2}$.}
\label{fig:2-PFD} 
\end{figure}
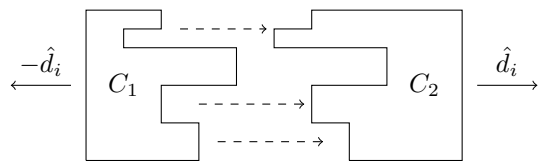
\end{proof}

Proceeding a step further we can generalize the idea of a Physically
Feasible Decomposition into the case where the removal of a set of
edges, along a particular spatial direction $\hat{d_{i}}$, separates
the model into more than two weakly connected components. Assume that
after the removal of a set of edges $\bar{E_{i}}\subseteq E_{i}$,
we end up with $k>2$ components. Such a decomposition is physically
feasible if we can obtain it by applying a 2\textendash PFD of the
original model by removing an appropriate subset of edges of $\bar{E_{i}}$,
and in turn by repeating 2\textendash PFD procedures on the resulting
submodels, recursively. A PFD giving rise to $k>2$ components, that
can be accomplished recursively by applying a series 2\textendash PFD's,
will be called a $k$\textendash PFD. 

The above idea is formalized in the following definition.
\begin{defn}[$k$\textendash PFD]
\label{def:k-PFD}Let $\mathscr{M}$ be a toy model and $G(V,E)$
its associated directed graph. Assume that the removal of
a non-empty set of edges $\bar{E_{i}}\subseteq E_{i}$, where $E_{i}$
is the set of all edges of $G(V,E)$ along the direction $\hat{d}_{i}$,
gives rise to the CCG, $G_{C}(V_{C},E_{C})$, consisting of $k\geq2$
weakly connected components. We say that the removal of the edges
$\bar{E_{i}}$ implies a $k$\textendash PFD of the model $\mathscr{M}$,
if there exists a set of edges $\bar{E}_{i}^{0}\subseteq\bar{E_{i}}$,
whose removal implies a 2\textendash PFD of $\mathscr{M}$ into $C_{1},C_{2}$,
for which exactly one of the following is true:
\begin{itemize}
\item $C_{1}\in V_{C}$ and $C_{2}\in V_{C}$;
\item $C_{1}\in V_{C}$ and the removal of all edges of $\bar{E_{i}}\backslash\bar{E}_{i}^{0}$
from $C_{2}$, implies its $(k-1)$\textendash PFD;
\item $C_{2}\in V_{C}$ and the removal of all edges of $\bar{E_{i}}\backslash\bar{E}_{i}^{0}$
from $C_{1}$, implies its $(k-1)$\textendash PFD;
\item $C_{j}\notin V_{C}$, for $j=1,2$ and appropriate removal of edges
of $\bar{E_{i}}\backslash\bar{E}_{i}^{0}$ from each one of them,
implies a $k_{1}$\textendash PFD of $C_{1}$ and a $k_{2}$\textendash PFD
of $C_{2}$, such that $k_{1}+k_{2}=k$. 
\end{itemize}
If the removal of any set of edges $\bar{E_{i}}\subseteq E_{i}$,
results in a CCG with only one weakly connected component, we say that we have
a \emph{ 1\textendash PFD} or a \emph{non PFD} of the model. 
\end{defn}
\begin{rem}
\label{rmk:k-PFD_as_Tree}The structure of a $k$\textendash PFD of
a model $\mathscr{M}$ can be represented by a full, ordered, binary
tree $T$, having as its root node the entire vertex set $V_{C}$.
The internal nodes of $T$ are subsets of $V_{C}$ corresponding
to weakly connected components of $G_{C}$ resulting in each step
of the recursive application of 2\textendash PFD's. Finally, the leaves
of $T$ are the singletons of $V_{C}$, that is, the components of
the CCG corresponding to the $k$\textendash PFD. Clearly, by construction
each node of $T$, will have either 0 or 2 children, thus $T$ is
full. Moreover, $T$ can be assumed to be ordered, that is, we distinguish
the left and the right child of each node. According to Lemma \ref{lem:2-PFDCondition},
every 2\textendash PFD separates a weakly connected component into
two child components, connected only in a single direction. In view
of this property we assign to the left child of each node in $T$,
the child component from which the edges originate, and to the right
child of the node in $T$, the component to which the edges terminate.
\end{rem}
Our aim is to identify those subsets of edges $\bar{E_{i}}\subseteq E_{i}$,
that is, sets of edges aligned to a spatial direction $\hat{d}_{i}$,
whose removal gives rise to a $k$\textendash PFD of the model. The
following theorem serves as a characterization of this property. 
\begin{thm}
\label{th:DAGImpliesPFD}Let $\mathscr{M}$ be a toy model and its
associated directed graph $G(V,E)$. Let further $G_{C}(V_{C},E_{C})$
be the CCG resulting after the removal of a non-empty set of edges
$\bar{E_{i}}\subseteq E_{i}$, where $E_{i}$ is the set of all edges
of $G(V,E)$ along the direction $\hat{d}_{i}$. The removal of the
edges $\bar{E_{i}}$ implies a $k$\textendash PFD, $k\geq2$ of the
model $\mathscr{M}$, if and only if $G_{C}$ is a Directed Acyclic
Graph (DAG).
\end{thm}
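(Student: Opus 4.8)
The plan is to translate the statement into a purely combinatorial property of the CCG and then induct on $k=|V_C|$. The key bridge is that, since deleting all of $\bar{E_i}$ separates the model into $C_1,\dots,C_k$, every edge of $E$ joining two distinct components must lie in $\bar{E_i}$; consequently any $\bar{E}_i^0\subseteq\bar{E_i}$ that realises a $2$\textendash PFD splits the vertex set $V_C$ of the CCG into two blocks $\mathcal{A},\mathcal{B}$, with the two resulting model-components equal to $\bigcup_{C\in\mathcal{A}}C$ and $\bigcup_{C\in\mathcal{B}}C$. By Lemma \ref{lem:2-PFDCondition} such a $2$\textendash PFD is feasible exactly when all $G_C$-edges across the cut are oriented the same way. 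I would first make this dictionary precise, observing in particular that the CCG of each resulting submodel (with respect to the leftover edges $\bar{E_i}\setminus\bar{E}_i^0$) is exactly the induced subgraph $G_C[\mathcal{A}]$, respectively $G_C[\mathcal{B}]$. I would also record that, assuming as is natural that the model graph is weakly connected (a single model being one connected piece), so is $G_C$, and so are the induced subgraphs produced along the way.

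For the implication ``$k$\textendash PFD $\Rightarrow$ DAG'' I would induct on $k$. The outermost $2$\textendash PFD gives a partition $V_C=\mathcal{A}\cup\mathcal{B}$ whose crossing edges all point, say, from $\mathcal{A}$ to $\mathcal{B}$ (Lemma \ref{lem:2-PFDCondition}), while the inductive hypothesis applied to the two submodels yields that $G_C[\mathcal{A}]$ and $G_C[\mathcal{B}]$ are acyclic. Concatenating a topological ordering of $G_C[\mathcal{A}]$ with one of $G_C[\mathcal{B}]$ then produces a topological ordering of $G_C$, so $G_C$ is acyclic (a digraph is acyclic if and only if it admits a topological ordering). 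The base case $k=2$ is immediate: weak connectivity forces at least one of the edges $(C_1,C_2),(C_2,C_1)$ to be present, and acyclicity forbids both.

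For the converse ``DAG $\Rightarrow$ $k$\textendash PFD'', again by induction on $k$, the real obstacle is to exhibit at each stage an admissible first cut: I must split the weakly connected DAG $G_C$ into two blocks that are \emph{each} weakly connected and whose crossing edges are one-directional, because Lemma \ref{lem:2-PFDCondition} insists that a $2$\textendash PFD leave exactly two weakly connected components. The naive choice ``one source versus everything else'' can fail, since deleting a single vertex may disconnect the remainder. The remedy I would use is: pick a source $s$ of $G_C$ (it exists because $G_C$ is acyclic), let $B_1,\dots,B_m$ be the weak components of $G_C-s$, and set $\mathcal{B}=B_m$ and $\mathcal{A}=\{s\}\cup B_1\cup\cdots\cup B_{m-1}$ (so $\mathcal{A}=\{s\}$ when $m=1$). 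Because $s$ is a source every edge at $s$ leaves it, and distinct $B_\ell$ are joined by no edges; a direct check then shows that $\mathcal{A}$ and $\mathcal{B}$ are both weakly connected, that no edge runs from $\mathcal{B}$ into $\mathcal{A}$, and that the edge $s\to B_m$ does cross, so this cut is genuinely one-directional and hence a valid $2$\textendash PFD. Since $G_C[\mathcal{A}]$ and $G_C[\mathcal{B}]$ are weakly connected DAGs on strictly fewer vertices, the inductive hypothesis decomposes each of them, and with $k_1=|\mathcal{A}|$, $k_2=|\mathcal{B}|$ satisfying $k_1+k_2=k$ the whole construction matches one of the cases of Definition \ref{def:k-PFD}.
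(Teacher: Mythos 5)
Your proposal is correct, and it diverges from the paper's proof in a way worth spelling out. For the ``only if'' direction the paper reads a topological ordering off the left-to-right leaf order of the binary tree of Remark \ref{rmk:k-PFD_as_Tree}; your induction with concatenated topological orderings of $G_C[\mathcal{A}]$ and $G_C[\mathcal{B}]$ is essentially that same argument unrolled, so the difference there is cosmetic. The genuine divergence is in the ``if'' direction. The paper peels off the topologically first component $C_1$ and applies Lemma \ref{lem:2-PFDCondition} to the pair $(C_1,\,V_C\backslash\{C_1\})$, tacitly treating $V_C\backslash\{C_1\}$ as a single weakly connected component; as you observe, this can fail. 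Indeed, if $V_C=\{C_1,C_2,C_3\}$ with edges $(C_1,C_2)$ and $(C_1,C_3)$ only, removing all edges leaving $C_1$ produces three components, not two, so that removal is not a 2\textendash PFD and the Lemma does not apply as invoked. Your cut --- a source $s$ together with all but one weak component of $G_C-s$, against the remaining one --- repairs exactly this defect: weak connectivity of $G_C$ forces $s$ to be adjacent to every weak component of $G_C-s$, so both sides of your cut are weakly connected, all crossing edges leave $s$, and at least one crossing edge exists, which lets Lemma \ref{lem:2-PFDCondition} and the induction close all cases of Definition \ref{def:k-PFD}. The price is your standing assumption that the model graph (hence $G_C$) is weakly connected; the paper never states this, but its framework needs it equally, so nothing is lost. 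In short: the paper's peeling argument is shorter and mirrors the detachment order produced by the algorithms of Section \ref{sec:MaximalPFD}, but as written it has a connectivity gap, and your version is the one that is actually watertight.
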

\begin{proof}
If $G_{C}(V_{C},E_{C})$ is a DAG, then there exists a topological ordering
of its vertices $C_{1}, C_{2},\ldots, C_{k}$, that is, an ordering
such that for all $(C_{i},C_{j})\in E_{C}$, $i\leq j$ holds. In
view of this fact, since $C_{1}$ is the first in this ordering, there
will be only outgoing edges from the vertices of $C_{1}$, to those
of $V_{C}\backslash\{C_{1}\}$. Hence, according to Lemma \ref{lem:2-PFDCondition}
removal of the edges originating from $C_{1}$ and terminating to
$V_{C}\backslash\{C_{1}\}$ is a 2\textendash PFD (see Figure \ref{fig:C1PFD}).
\begin{figure}[h]
\begin{centering}
\begin{center}
\begin{tikzpicture}
\node at (0.5,1) [] {$C_1$};
\draw (0, 0) -- (0, 2) -- (1, 2) --
        (1, 1.75) -- (0.5, 1.75) -- (0.5, 1.5) --
        (2, 1.5) -- (2, 1) -- (1, 1) --
        (1, 0.5) -- (1.5, 0.5) -- (1.5, 0) -- (0, 0) ;

\node at (5,1) [] {$V_C\backslash \{C_1\}$};
\draw (3.5, 0) -- (6, 0) -- (6, 2) -- (3, 2) --
        (3, 1.75) -- (2.5, 1.75) -- (2.5, 1.5) --
        (4, 1.5) -- (4, 1) -- (3, 1) --
        (3, 0.5) -- (3.5, 0.5) -- (3.5, 0);

\draw[dashed] (4, 1.5) -- (6, 1.5);
\draw[dashed] (3.5, 0.5) -- (6, 0.5);

\draw[->,dashed] (1.25, 1.75) -- (2.4, 1.75);
\draw[->,dashed] (1.5, 0.75) -- (2.9, 0.75);
\draw[->,dashed] (1.75, 0.25) -- (3.15, 0.25);
\end{tikzpicture}
\par\end{center}
\par\end{centering}
\caption{2\textendash PFD of $C_{1}$, $\ensuremath{V_{C}\backslash\{C_{1}\}}$.}
\label{fig:C1PFD} 
\end{figure}
Now, if we denote by $G_{C}^{\prime}$ the subgraph of $G_{C}$ induced
by $V_{C}\backslash\{C_{1}\}$, we may note that $C_{2},\ldots,$$C_{k}$,
is a topological order of its vertices. Hence, $C_{2}$ can be detached
from $G_{C}^{\prime}$ through a 2\textendash PFD following a similar
procedure as above. Thus, after $k-1$ recursive applications of 2\textendash PFD's,
utilizing appropriate subsets of $\bar{E_{i}}$, we obtain a decomposition
of the model $\mathscr{M}$ into $k$ weakly connected components
$C_{1},C_{2},\ldots,C_{k}$, which is a $k$\textendash PFD.

Conversely, assume that the removal of the set of edges $\bar{E_{i}}$,implies
a $k$\textendash PFD of the model and let $G_{C}(V_{C},E_{C})$ be
the corresponding CCG. As explained in Remark \ref{rmk:k-PFD_as_Tree}
a $k$\textendash PFD of a model can be represented by a full, ordered,
binary tree $T$. Moreover, in view of the way that the left and right
children are assigned in each node of $T$, it is easy to verify that
if $(C_{i},C_{j})\in E_{C}$ then $C_{i}$ will appear on $T$, to
the left of $C_{j}$. Hence, if we order the leafs of $T$ starting
from the leftmost one moving to the right, we get a total order $C_{1},C_{2},\ldots,C_{k}$,
which is clearly a topological ordering of $G_{C}(V_{C},E_{C})$.
Thus, $G_{C}(V_{C},E_{C})$ is acyclic.
\end{proof}

\section{\label{sec:MaximalPFD}Maximal PFD along a spatial direction}

In the previous section, a characterization of physically feasible
decompositions along a particular spatial direction was given in
terms of the absence of cycles on the implied CCG. In this section,
we propose a method to derive such a maximal acyclic CCG, as the condensation
of the graph resulting after making edges not aligned to the chosen
direction, bidirectional. In this respect we introduce the following definitions.
\begin{defn}[Maximal PFD]
Let $\mathscr{M}$ be a toy model and let $G(V,E)$ be the associated
directed graph.  The removal of a set of edges $\bar{E}_{i}\subseteq E_{i}$,
along a spatial direction $\hat{d}_{i}$, implies a\emph{ maximal
PFD of the model along $\hat{d}_{i}$}, if the implied CCG is maximal,
that is, any set of edges $\bar{E}_{i}^{\prime}$, such that $\bar{E}_{i}\subseteq\bar{E}_{i}^{\prime}\subseteq E_{i}$,
implies the same CCG with $\bar{E}_{i}$.
\end{defn}
\begin{defn}[Projection along a direction]
\label{def:ProjectionAlongDir}Let $\mathscr{M}$ be a toy model,
$G(V,E)$ its associated directed graph and let $E_{i}\subseteq E$
be the set of all edges along the spatial direction $\hat{d}_{i}$.
We define the \emph{projection of $G(V,E)$ along the direction $\hat{d}_{i}$},
to be the graph $G_{i}(V,E\cup R_{i})$, where $R_{i}$ contains all
the edges of $G$ not in $E_{i}$, reversed, that is $R_{i}=\{(u,v):(v,u)\in E\backslash E_{i}\}$. 
\end{defn}
We illustrate the above notion via the following example. 
\begin{example}
\label{exa:ProjectionsAlongDirs}Consider the model presented in Example
\ref{exa:2PFDvsNoPFD} and the corresponding graph $G(V,E)$ shown
in Figure \ref{fig:Model1_Graph}. According to Definition \ref{def:ProjectionAlongDir}, the projection
of $G(V,E)$ along $\hat{d}_{1}$ and $\hat{d}_{2}$ are shown in
Figure \ref{fig:Model1_Projections}.
\end{example}
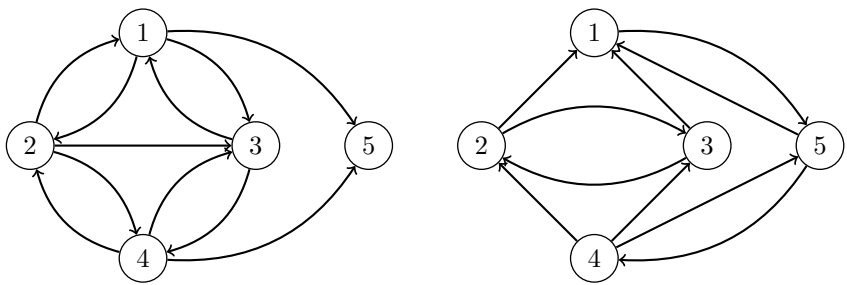
\begin{figure}[h]
\begin{centering}
\begin{center}
\begin{tikzpicture}
  \SetGraphUnit{1.5}
  \GraphInit[vstyle=Normal]

\begin{scope}
 \Vertex[L=$1$]{V1}
  \SOWE[L=$2$](V1){V2}
  \SOEA[L=$4$](V2){V4}
  \SOEA[L=$3$](V1){V3}
  \EA[L=$5$](V3){V5}

  \Edge[style={->}](V2)(V3)
  \Edge[style={bend left, ->}](V1)(V5)
  \Edge[style={bend right, ->}](V4)(V5)
  \Edge[style={bend left, ->}](V2)(V1)
  \Edge[style={bend left, ->}](V4)(V2)
  \Edge[style={bend left, ->}](V3)(V1)
  \Edge[style={bend left, ->}](V4)(V3)
  \Edge[style={bend left, ->}](V1)(V2)
  \Edge[style={bend left, ->}](V2)(V4)
  \Edge[style={bend left, ->}](V1)(V3)
  \Edge[style={bend left, ->}](V3)(V4)

\end{scope}

\begin{scope}[xshift=6 cm]
  \Vertex[L=$1$]{V1}
  \SOWE[L=$2$](V1){V2}
  \SOEA[L=$4$](V2){V4}
  \SOEA[L=$3$](V1){V3}
  \EA[L=$5$](V3){V5}

  \Edge[style={bend left, ->}](V2)(V3)
  \Edge[style={bend left, ->}](V1)(V5)
  \Edge[style=->](V4)(V5)
  \Edge[style={bend left, ->}](V3)(V2)
  \Edge[style=->](V5)(V1)
  \Edge[style={bend left, ->}](V5)(V4)

  \Edge[style=->](V2)(V1)
  \Edge[style=->](V4)(V2)
  \Edge[style=->](V3)(V1)
  \Edge[style=->](V4)(V3)

\end{scope}
\end{tikzpicture}

\par\end{center}
\par\end{centering}
\caption{The projections of $G(V,E)$ along $\hat{d}_{1}$ (left), $\hat{d}_{2}$
(right).}
\label{fig:Model1_Projections} 
\end{figure}

We proceed now to the main result of the present section. 
\begin{thm}
\label{thm:CondensationGivesMaxPFD}Let $\mathscr{M}$ be a toy model, let
$G(V,E)$ be its associated directed graph\emph{ and } let $G_{co}^{i}(V_{co}^{i},E_{co}^{i})$ be
the condensation of the projection $G_{i}(V,E\cup R_{i})$ of $G(V,E)$,
along \emph{$\hat{d}_{i}$}. Then, $G_{co}^{i}(V_{co}^{i},E_{co}^{i})$
is a CCG corresponding to a maximal PFD along $\hat{d}_{i}$.
\end{thm}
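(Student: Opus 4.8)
The plan is to exhibit an explicit edge set whose removal realises the condensation as its CCG, and then to treat feasibility and maximality separately. Let $\bar{E_{i}}$ be the set of those edges of $E_{i}$ whose two endpoints lie in \emph{different} strongly connected components of the projection $G_{i}(V,E\cup R_{i})$; these are exactly the ``physically removable'' edges along $\hat{d}_{i}$. I would first prove that the weakly connected components of $G$ remaining after deleting $\bar{E_{i}}$ coincide with the strongly connected components of $G_{i}$, and that the induced arcs coincide with those of the condensation, so that $G_{co}^{i}$ \emph{is} the CCG implied by $\bar{E_{i}}$. Feasibility is then immediate, and the substantive work is the maximality claim.

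For the identification I would argue both inclusions at the vertex level. If $u,v$ lie in the same component $S$ of $G_{i}$, choose directed paths from $u$ to $v$ and from $v$ to $u$; every vertex on them is mutually reachable with both $u$ and $v$, hence lies in $S$, so the closed walk they form stays inside $S$. Each of its arcs is either a (bidirectional) non-$\hat{d}_{i}$ edge, which survives in $E\setminus\bar{E_{i}}$, or a $\hat{d}_{i}$ edge internal to $S$, which by definition is \emph{not} in $\bar{E_{i}}$; reading the walk as undirected shows $u,v$ are weakly connected after deletion. Conversely, an undirected path joining $u$ and $v$ in $(V,E\setminus\bar{E_{i}})$ uses only edges whose endpoints are mutually reachable in $G_{i}$ (bidirectional edges trivially, retained $\hat{d}_{i}$ edges because they are internal to a single component), so consecutive vertices share a component and hence so do $u,v$. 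This shows the vertex sets agree; since every non-$\hat{d}_{i}$ edge (and every reversed edge of $R_{i}$) has both endpoints in one component, the only inter-component edges of $G_{i}$ are exactly those of $\bar{E_{i}}$, and comparing the two edge rules shows the arc sets of the CCG and of $G_{co}^{i}$ coincide as simple digraphs.

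Feasibility now follows with no further effort: the condensation of any directed graph is acyclic, so $G_{co}^{i}$ is a DAG, and Theorem \ref{th:DAGImpliesPFD} guarantees that the removal of $\bar{E_{i}}$ is a $k$\textendash PFD. (If $G_{i}$ is strongly connected then $G_{co}^{i}$ is a single vertex and the decomposition degenerates to a $1$\textendash PFD, which I would dispatch as a trivial boundary case.)

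The maximality statement requires that for every $\bar{E}_{i}'$ with $\bar{E_{i}}\subseteq\bar{E}_{i}'\subseteq E_{i}$ the implied CCG is unchanged; since enlarging the deleted set only removes further edges \emph{internal} to the components $S$, it suffices to show that each $S$ stays weakly connected after \emph{all} of its internal $\hat{d}_{i}$ edges are deleted, i.e.\ that $S$ is already connected through its non-$\hat{d}_{i}$ edges alone. This is the step I expect to be the main obstacle, since it fails for an arbitrary abstract digraph and genuinely invokes the geometry of the model. The clean way to see it is to assign to each block its coordinate $h(v)$ along $\hat{d}_{i}$: every edge of $E_{i}$ strictly increases $h$, while (for connections orthogonal to $\hat{d}_{i}$) every non-$\hat{d}_{i}$ edge preserves $h$. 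A directed cycle in $G_{i}$ has zero net change in $h$, so it can contain no edge of $E_{i}$; hence each strongly connected component is in fact strongly connected using only the bidirectional non-$\hat{d}_{i}$ edges, and deleting its internal $\hat{d}_{i}$ edges cannot disconnect it. This yields invariance of the weakly connected components, hence of the CCG, and establishes maximality. For connections not orthogonal to $\hat{d}_{i}$ the same conclusion should be drawn from rigidity/realizability of the model (a monotone height still precludes $h$-increasing cycles), and pinning down precisely which realizability hypothesis is needed is, I expect, the most delicate part of the argument.
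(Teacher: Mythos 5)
Your construction of $\bar{E}_i$ (the edges of $E_i$ whose endpoints lie in distinct strongly connected components of the projection), your two-way identification of those strongly connected components with the weakly connected components of $G(V,E\setminus\bar{E}_i)$, and the feasibility step via acyclicity of the condensation and Theorem \ref{th:DAGImpliesPFD} all match the paper's proof essentially line by line. The gap is in the maximality step, where you depart from the paper, and the departure is fatal. Your height-function argument claims that a directed cycle in $G_i(V,E\cup R_i)$ can contain no edge of $E_i$, so that each strongly connected component is held together by the non-$\hat{d}_i$ edges alone. This is false, and the paper's own running example refutes it even though all connections there are orthogonal: in the projection along $\hat{d}_2$ (Figure \ref{fig:Model1_Projections}, right) the walk $4\to 2\to 3\to 1\to 5\to 4$ is a directed cycle using the two vertical edges $(4,2)$ and $(3,1)$, closed off through the bidirectionalized horizontal edges $(2,3)$, $(1,5)$ and $(5,4)\in R_2$. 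The reason your argument breaks is that blocks are extended rigid bodies, not points: block $5$ spans the full height of the model, so no coordinate $h(v)$ exists that horizontal edges preserve; this has nothing to do with non-orthogonality, which is where you anticipated the difficulty. Consequently the statement you reduce maximality to --- that each component stays weakly connected after \emph{all} of its internal $E_i$ edges are deleted --- is also false: in the same example the single component $\{1,2,3,4,5\}$ falls apart into $\{2,3\}$ and $\{1,4,5\}$ once every vertical edge is removed.

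The missing idea is that maximality cannot be established for arbitrary supersets $\bar{E}_i'\subseteq E_i$; it holds only for supersets that themselves imply a PFD, and this feasibility hypothesis must be used. That is what the paper does: given $\bar{E}_i\subseteq\bar{E}_i'\subseteq E_i$ implying a PFD, take any edge $(u,v)\in\bar{E}_i'\setminus\bar{E}_i$ whose endpoints land in distinct components $C_u,C_v$ of the finer decomposition. Since $(u,v)\notin\bar{E}_i$, the vertices $u,v$ lie in one strongly connected component of $G_i$, so there is a directed path from $v$ back to $u$ in $G_i$; that path must cross from $C_v$ to $C_u$ through some edge $(v',u')\in\bar{E}_i'$, so the CCG implied by $\bar{E}_i'$ contains both $(C_u,C_v)$ and $(C_v,C_u)$, contradicting the DAG condition of Theorem \ref{th:DAGImpliesPFD}. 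Hence no such edge exists and the CCG is unchanged. Note that you adopted the literal reading of the definition of maximal PFD (invariance under all supersets), and the counterexample above shows the theorem would be false under that reading; the paper's proof quantifies only over physically feasible refinements. A final small point: when $G_i$ is strongly connected, $\bar{E}_i=\emptyset$ and the implied CCG is not even defined (the definition requires a non-empty removed set), so the boundary case you dispatch as trivial is really outside the definition; the paper handles it separately as a 1--PFD.
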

\begin{proof}
Define the set of edges whose endpoints lie on two distinct strongly
connected components of $G_{i}(V,E\cup R_{i})$, that is
\[
\bar{E}_{i}=\{(u,v)\in E:u\in C_{k},v\in C_{l}\textrm{ where }C_{k},C_{l}\in V_{co}^{i}\textrm{ and }k\neq l\}.
\]
Note that if either $(u,v)\in E \backslash E_{i}$ or $(u,v)\in R_{i}$, then
$u,v$ lie on the same strongly connected component of $G_{i}(V,E\cup R_{i})$,
because there are edges connecting them in both directions. Thus,
$\bar{E}_{i}\subseteq E_{i}$. 

If any two vertices $u,v\in V$ lie on the same strongly connected
component of $G_{i}(V,E\cup R_{i})$, then there exists a directed
path from $u$ to $v$, whose intermediate vertices lie on the same
strongly connected component with $u,v$. Every edge on the path which
is in $R_{i}$, can be replaced by its ``reverse'', which lies in
$E\backslash E_{i}\subseteq E\backslash\bar{E}_{i}$. The rest of
the edges on the path, not in $R_{i}$, obviously cannot be in $\bar{E}_{i}$,
since the latter contains edges whose endpoints lie on two distinct
strongly connected components of $G_{i}(V,E\cup R_{i})$. Hence, any
two vertices $u,v\in V$ lying on the same strongly connected component
of $G_{i}(V,E\cup R_{i})$, can be connected via an undirected path,
which lies entirely on the same weakly connected component as $u,v$,
using only edges from $E\backslash\bar{E}_{i}$. Thus, all vertices
lying on the same strongly connected component of $G_{i}(V,E\cup R_{i})$,
belong to the same weakly connected component of $G(V,E\backslash\bar{E}_{i})$. 

Conversely, if any two vertices $u,v\in V$ lie on the same weakly
connected component of $G(V,E\backslash\bar{E}_{i})$, then there
exists an undirected path from $u$ to $v$, whose intermediate vertices
are on the same weakly connected component with $u,v$. Our aim is
to show that there exists a directed path from $u$ to $v$ in $G_{i}(V,E\cup R_{i})$.
In this respect, the edges on the undirected path having the correct
orientation, that is from $u$ to $v$, can be used to form the directed
path. If an edge on the undirected path belongs to $E\backslash E_{i}$
and is oriented from $v$ to $u$, then it can be replaced in $G_{i}(V,E\cup R_{i})$
by its ``reverse'' which belongs to $R_{i}$. On the other hand,
if an edge on the undirected path belongs to $E_{i}$, then both its
endpoints must lie in the same strongly connected component of $G_{i}(V,E\cup R_{i})$,
otherwise this edge should be in $\bar{E}_{i}$, whose elements have
been removed from $G(V,E\backslash\bar{E}_{i})$. In view of this,
if such an edge does not have the desired orientation (i.e. from $u$
to $v$), we can find a directed path in $G_{i}(V,E\cup R_{i})$,
with the correct orientation, to replace it. Thus, any two vertices
lying on the same weakly connected component of $G(V,E\backslash\bar{E}_{i})$,
belong to the same strongly connected component of $G_{i}(V,E\cup R_{i})$.

In view of the above discussion, it is clear that the strongly connected
components of $G_{i}(V,E\cup R_{i})$ coincide with the weakly connected
components of $G(V,E\backslash\bar{E}_{i})$. Thus, $V_{co}^{i}$
is the vertex set of the CCG implied by the removal of the edges of
$\bar{E}_{i}$ from $G(V,E)$. Further, it is straightforward to verify
that the set of edges $E_{co}^{i}$ are exactly the edges of the CCG
implied by the removal of the edges of $\bar{E}_{i}$ from $G(V,E)$.
Thus, $G_{co}^{i}(V_{co}^{i},E_{co}^{i})$ is a CCG corresponding
to the removal of the edges of $\bar{E}_{i}$. Since the condensation
graph of any directed graph is a DAG, the removal of the edges of
$\bar{E}_{i}$, implies a $k$\textendash PFD of the model, where
$k=\left|V_{co}^{i}\right|$. 

To show that the removal of the edges of $\bar{E}_{i}$, implies a
maximal PFD along $\hat{d}_{i}$, assume there exists a set of edges
$\bar{E}_{i}^{\prime}$, such that $\bar{E}_{i}\subseteq\bar{E}_{i}^{\prime}\subseteq E_{i}$,
implying a PFD of the model along $\hat{d}_{i}$. Consider an edge
$(u,v)\in\bar{E}_{i}^{\prime}\backslash\bar{E}_{i}$, whose endpoints
lie on distinct weakly connected components $C_{u},C_{v}$, in $G(V,E\backslash\bar{E}_{i}^{\prime})$,
such that $u\in C_{u}$ and $v\in C_{v}$. Clearly, since $(u,v)\notin\bar{E}_{i}$,
it is present in $G(V,E\backslash\bar{E}_{i})$ and both $u,v$ lie
in the same weakly connected component of the latter. In this case,
it is evident from the discussion above that $u,v$ must lie on the
same strongly connected component of $G_{i}(V,E\cup R_{i})$. Thus,
there exists a directed path from $v$ to $u$ in $G_{i}(V,E\cup R_{i})$.
Now, since $u\in C_{u}$ and $v\in C_{v}$ in $G(V,E\backslash\bar{E}_{i}^{\prime})$,
there exists at least one edge $(v^{\prime},u^{\prime})\in\bar{E}_{i}^{\prime}$,
in the directed path from $v$ to $u$, such that $u^{\prime}\in C_{u}$
and $v^{\prime}\in C_{v}$, otherwise $C_{u},C_{v}$ would not be
distinct. Hence, the weakly connected components $C_{u},C_{v}$ are
connected in the CCG implied by the removal of the edges of $\bar{E}_{i}^{\prime}$,
via to opposite edges, which in turn implies that such a removal does
not imply a PFD. Having arrived at a contradiction, we conclude that
there exists no edge in $\bar{E}_{i}^{\prime}\backslash\bar{E}_{i}$,
thus $\bar{E}_{i}^{\prime}=\bar{E}_{i}$.
\end{proof}
Theorem \ref{thm:CondensationGivesMaxPFD} essentially provides 
a method to obtain a maximal PFD of a given model along a spatial 
direction $\hat{d}_{i}$. According to the above result the CCG 
corresponding to a maximal PFD along $\hat{d}_{i}$ coincides with
the condensation, $G_{i}(V,E\cup R_{i})$, of $G(V,E)$ along this particular
direction. Thus, the components into which a maximal PFD decomposes the model, 
coincide with the strongly connected components of the corresponding projection.
The computation of the strongly connected components can be accomplished in 
linear time, using Kosaraju's algorithm \cite{CormenEtAl2001,Sharir1981}, 
Tarjan's strongly connected components algorithm \cite{Tarjan1972} or Dijkstra's 
path based strong component algorithm \cite{Dijkstra1976}. Moreover, Kosaraju's 
and Tarjan's algorithms also compute a reverse topological ordering of the 
strongly connected components of the graph on which it is applied. The topological
ordering computed by these algorithms dictates the order under which the components
detected can be detached from the model in the process of a step-by-step decomposition
along the chosen spatial direction. 
\begin{example}\label{exa:MaxPFDs}
Applying some strongly connected component computation algorithm on
the projections of $G(V,E)$ along $\hat{d}_{1}$, $\hat{d}_{2}$,
given in Example \ref{exa:ProjectionsAlongDirs}, we get respectively
the condensations shown in Figure \ref{fig:Model1_Condenstations}.
\begin{figure}[h]
\begin{centering}
\begin{center}
\begin{tikzpicture}
  \SetGraphUnit{2}
  \GraphInit[vstyle=Normal]

\begin{scope}

  \Vertex[L=$C_1$]{C1}
  \EA[L=$C_2$](C1){C2}

  \Edge[style=->](C1)(C2)

\end{scope}

\begin{scope}[xshift=6 cm]
   \Vertex[L=$C_1^\prime$]{C1}

\end{scope}
\end{tikzpicture}

\par\end{center}
\par\end{centering}
\caption{The condensations of $G(V,E)$ along $\hat{d}_{1}$ (left), $\hat{d}_{2}$
(right).}
\label{fig:Model1_Condenstations} 
\end{figure}
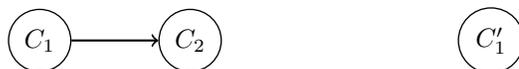
Clearly, the condensed graph corresponding to the projection along
$\hat{d}_{1}$, coincides with the CCG shown in Figure \ref{fig:Model1_CCG_2PFD}
and clearly implies a 2\textendash PFD of the model along this direction.
On the other hand the condensed graph corresponding to the projection
along $\hat{d}_{2}$, consists of only one component, indicating that
a  $k$\textendash PFD, for $k\geq2$, along $\hat{d}_{2}$ is not possible. 
\end{example}

\section{\label{sec:HierarchicalPFD}Hierarchical PFD of toy models and assembly instructions generation}
In the present section an outline of the procedure to obtain a recursive, physically feasible decomposition of a given toy model $\mathscr{M}$, is proposed. The key step of the procedure presented in what follows, is based on both the theoretical analysis presented in Section \ref{sec:PFD}, and the use of well established algorithmic tools for the detection of strongly connected components in directed graphs, as shown in Section \ref{sec:MaximalPFD}. While each step of the procedure results in a flat collection of weakly connected components, corresponding to a maximal PFD along some given spatial direction, the outcome of the overall procedure will be a hierarchical model of components, i.e. a rooted tree, having as its top level component the toy model $\mathscr{M}$ itself, and bottom level elements each of the constituent blocks of the model. Having obtained a hierarchical decomposition of the model, some appropriate tree traversal algorithm may be applied to reverse the decomposition process and produce a step-by-step assembly manual. This procedure is outlined at the end of this section.

Using the setup of the previous sections, assume that $G(V,E)$ is the directed graph associated to the model $\mathscr{M}$. Assume also that each directed edge in $E$ is aligned to one of the $p$ distinct spatial directions $\hat{d_{i}}$, $i=1,2,\ldots,p$. Finally, assume that $\textrm{MaxPFD}(C,i)$ is a readily made function taking as its first argument a weakly connected component of $G(V,E)$ and as its second argument an integer $i=1,2,\ldots,p$. The function returns an ordered list of components $C_1, C_2,\ldots, C_k$, $k\geq 1$, into which $C$ can be decomposed as the result of a Maximal PFD along the direction $\hat{d_{i}}$. According to the results of Section \ref{sec:MaximalPFD} such a function can be implemented using well known, linear-time, strongly connected components detection algorithms. 

With this background we define the function $\textrm{HMaxPFD}(C)$ which accepts as argument a weakly connected component of $G(V,E)$, $C$, and returns a hierarchical decomposition of the model $\mathscr{M}$. The function HMaxPFD is outlined as follows:

\subsubsection*{HMaxPFD(C)}
\begin{itemize}
	\item Call $\textrm{MaxPFD}(C,i)$ for $i=1,2,\ldots,p$. 
	\item If for at least one  $i=1,2,\ldots,p$, the number of components $C_1, C_2, \ldots, C_k$, returned by the respective MaxPFD, is greater than 1, then 
	\begin{itemize}
		\item For $j = 1 \ldots k$
		\begin{itemize}
			\item Call $\textrm{AppendChild}(C, C_j)$
			\item Call $\textrm{HMaxPFD}(C_j)$
		\end{itemize}
	\end{itemize}
\end{itemize}

In the above pseudocode the function $\textrm{AppendChild}(C, C_j)$ is called, which is assumed to append the subcomponent $C_j$ to $C$, as its child in the hierarchy of the intended decomposition. To implement this in practice, would require each of the discovered components to be able to maintain a list of pointers, pointing from each parent to its children components. The technical details of such an implementation are out of the scope of the present paper. Finally, when the argument of $\textrm{HMaxPFD}$ is a single vertex $v$ (which will necessarily be without edges), we define $\textrm{HMaxPFD}(v) = v$ and the hierarchical operations terminate there, to then pass to the next branch (if any). 

To obtain the tree corresponding to the hierarchical PFD of $\mathscr{M}$, with the associated graph $G(V,E)$, one has to invoke the function HMaxPFD, using the entire graph $G$ as its sole argument. To provide a worst case analysis of the complexity of the HMaxPFD algorithm, we first take into account that each run of $\textrm{MaxPFD}(C,i)$ is essentially a call of Tarjan's or a similar algorithm, whose time complexity is $O(|V|+|E|)$, where  $|V|,|E|$ are the number of nodes and edges of the graph to which it is applied. Considering the worst case scenario, the MaxPFD will be called at most $p$ times, until an actual decomposition, into two or more subcomponents is obtained. Moreover, at each level of the resulting hierarchical PFD tree, the total number of nodes (blocks) distributed along the components $C_1, C_2, \ldots, C_k$, will be at most $n$, where $n$ is total number of vertices in $G$ (blocks in $\mathscr{M}$). Thus, if we denote by $m$ the total number of edges in $G$, then the invocation of $\textrm{MaxPFD}(C_j,i)$, for $i=1,2,\ldots,p$, $j=1,2,\ldots,k$ will take at most $O(p(n+m))$ steps. Since the time complexity at each level of the tree is $O(p(n+m))$, the overall worst case complexity will occur on a PFD tree that has the maximum possible height, amongst all the PFD trees with $n$ leaves in total and whose non-leaf nodes have at least two children. This becomes evident if we take into account the fact that the leaves of a PFD tree are exactly the components of $G$ that can be no further decomposed, i.e. its individual blocks. In view of this, the maximum height PFD tree, will be a binary tree where every non-leaf node has exactly two children, out of which at least one is a leaf. The height of such a binary tree with $n$ leaves can be easily seen to be $n-1$. Thus, the worst case time complexity is $O(np(n+m))$.

We illustrate the above procedure in the following example.

\begin{example}\label{exa:HPFD}
	Consider the toy model of Example \ref{exa:2PFDvsNoPFD} and its associated graph shown in Figure \ref{fig:Model1_Graph}. Invoking 
HMaxPFD(G), the procedure will execute as follows:

\begin{itemize}
	\item Calling MaxPFD$(G,1)$ returns two components $C_{1},C_{2}$ where
	$C_{1},C_{2}$ consist of the vertices $\{1,2,3,4\}$ and $\{5\}$
	respectively.
	\item Since MaxPFD returned more than one component for $i=1$,
	\begin{itemize}
		\item For $j=1$,
		\begin{itemize}
			\item $C_{1}$ is appended as a child of $G$.
			\item HMaxPFD$(C_{1})$ is called. 
			\begin{itemize}
				\item MaxPFD$(C_{1},2)$ ($\hat{d}_{2}$ is the only direction available)
				returns three components, $C_{11},C_{12}$ and $C_{13}$, having as vertex sets $\{1\}$, $\{2,3\}$ and $\{4\}$ respectively. 
				\item Since MaxPFD returned more than one component for $i=2$, 
				\begin{itemize}
					\item For $j'=1$, 
					\begin{itemize}
						\item $C_{11}$ is appended as a child of $C_{1}$. 
						\item HMaxPFD$(C_{11})$ is called, returning $C_{11}$ since this is a single vertex. Recursion terminates.  
					\end{itemize}
					\item For $j'=2$, 
					\begin{itemize}
						\item $C_{12}$ is appended as a child of $C_{1}$. 
						\item HMaxPFD$(C_{12})$ is called.
			\begin{itemize}
				\item MaxPFD$(C_{12},1)$ ($\hat{d}_{1}$ is the only direction available here)
				returns two components, $C_{121}$ and $C_{122}$, having as vertex sets $\{2\}$ and $\{3\}$ respectively. 
				\item Since MaxPFD returned more than one component for $i=1$,  
				\begin{itemize}
					\item For $j''=1$, 
					\begin{itemize}
						\item $C_{121}$ is appended as a child of $C_{12}$. 
						\item HMaxPFD$(C_{121})$ is called, returning $C_{121}$ since this is a single vertex. Recursion terminates.  
					\end{itemize}
					\item For $j''=2$, 
					\begin{itemize}
						\item $C_{122}$ is appended as a child of $C_{12}$. 
						\item HMaxPFD$(C_{122})$ is called, returning $C_{122}$ since this is a single vertex. Recursion terminates.  
					\end{itemize}
				\end{itemize}
			\end{itemize}						
					\end{itemize}
					
					\item For $j'=3$, 
					\begin{itemize}
						\item $C_{13}$ is appended as a child of $C_{1}$. 
						\item HMaxPFD$(C_{13})$ is called, returning $C_{13}$ since this is a single vertex. Recursion terminates.  
					\end{itemize}
				\end{itemize}
			\end{itemize}
		\end{itemize}
		\item For $j=2$,
		\begin{itemize}
			\item $C_{2}$ is appended as a child of $G$.
			\item HMaxPFD$(C_{2})$ is called, returning $C_{2}$ since this is a single vertex. Recursion terminates. 
		\end{itemize}
	\end{itemize}
\end{itemize}

The resulting hierarchical PFD of the model is depicted in figure \ref{fig:HPFD}. The assembly instructions for the model can be recovered by applying a depth - first traversal, starting from the root node of the tree. 

\begin{figure}[h]
	\begin{center}

			\begin{tikzpicture}[level/.style={sibling distance=40mm/#1},
								ball/.style = {circle,draw,minimum width=1cm}]
				\node [ball] (z){$G$}
					child {node [ball] (a) {$C_1$}
						child {node [ball] (b) {$C_{11}$}}
						child {node [ball] (c) {$C_{12}$}
							child {node [ball] (i) {$C_{121}$}}
							child {node [ball] (ii) {$C_{122}$}}}
						child {node [ball] (d) {$C_{13}$}}
					}
					child {node [ball] (a) {$C_2$}};
			\end{tikzpicture}
	\end{center}
	\caption{The hierarchical PFD of the toy model in Example \ref{exa:2PFDvsNoPFD}.}
	\label{fig:HPFD} 
\end{figure}
\end{example}

Having obtained a hierarchical decomposition of a toy model $\mathscr{M}$, which is essentially a tree structure like the one shown in figure \ref{fig:HPFD}, we can proceed to the composition of its nodes to reverse the PFD and produce the assembly instructions. This goal can be accomplished by employing a tree traversal algorithm, which respects the parent - child hierarchy, in the sense that each node is visited after its children. The necessity of the requirement regarding the priority of visits between parents and their children, emerges from the fact in order to assemble a component, from its constituent subcomponents, i.e. the children of the node in the tree, one has to assemble each child component first. 

An algorithm appropriate for this task could be a postorder traversal applied on the hierarchical PFD tree of $\mathscr{M}$. Preorder, inorder and postorder are well known traversal procedures that can be applied on ordered binary trees, i.e. rooted trees whose nodes have at most two children labeled as ``left" and ``right". A preorder traversal visits first the parent node, then traverses the left subtree and finally the right subtree. Respectively, the inorder traversal first traverses the left subtree of a node, then visits the node itself and finally traverses the right subtree. Postorder traversal, traverses first the left subtree of a node, then its right subtree and finally it visits the node itself. 

While inorder traversal may be ambiguous when applied to a general (non - binary) ordered tree, preorder and postorder traversals are well defined. Here we focus on the generalized version of the postorder traversal algorithm, which is applicable to non-binary trees. Given a node $p$ in such a tree, the postorder traversal procedure can be defined recursively as follows: 
\begin{itemize}
	\item Traverse the leftmost child of $p$, 
	\item Visit the node $p$, 
	\item Traverse the right sibling of $p$. 
\end{itemize}

The output of such an algorithm is a series of nodes ordered in such a way that parent nodes appear in the sequence after all their children. In view of this fact, the sequence generated by the postorder traversal can be used to generate the assembly instructions of the toy model $\mathscr{M}$. It should be noted that the leaf nodes of the hierarchical PFD tree represent individual toy blocks that require no assembly, thus they can be safely neglected in the instructions generation procedure. On the other hand, the internal nodes of the tree represent components of the toy model consisting of an assembly of individual blocks or other subcomponents, and thus they are the ones for which assembly instructions are needed. The ingredients required for the assembly of those components are no other than their child nodes in the hierarchy. If a subcomponent is used as a building block for a higher level component in the PFD hierarchy, then the former will precede the latter in the ordered sequence produced by the postorder traversal. Moreover, as noted in section \ref{sec:intro}, the exact geometry of the interconnections between any two blocks has been recorded beforehand. Thus, identifying the blocks from which a component is comprised, provides enough information to recover the exact geometric structure of each component. As a result, following the order dictated by the traversal, the assembly instructions of every subcomponent comprising a higher level component will appear earlier in the assembly procedure manual, leaving no room for inconsistencies in the flow of instructions.  

The procedure is illustrated in the following example.

\begin{example}\label{exa:HPFD-Assembly}
	Given the hierarchical PFD of Example \ref{exa:HPFD}, we may apply the postorder traversal procedure on the tree shown in figure \ref{fig:HPFD}. The outcome of the algorithm is the following sequence of nodes:
	\[
	C_{11}, C_{121}, C_{122}, C_{12}, C_{13}, C_{1}, C_{2}, G
	\]
	As explained above the leaf nodes $C_{11}$, $C_{121}$, $C_{122}$, $C_{13}$, $C_{2}$ can be safely omitted, since they require no assembly. Doing so, the sequence of the remaining nodes consists only of $C_{12}$, $C_{1}$ and $G$, in that particular order. 
	
	Thus, the assembly instructions manual in this case should consist of the following three steps: 
	\begin{enumerate}
		\item Show how $C_{12}$ is assembled from its children nodes $C_{121}, C_{122}$.
		\item Respectively, show how $C_{1}$ can be assembled from $C_{11}, C_{12}, C_{13}$.
		\item Finally, show how $G$ is assembled from using $C_{1}, C_{2}$.
	\end{enumerate}

\end{example}	

\section{\label{sec:Conclusions}Conclusions}

In this note we study the problem of automatically producing step-by-step
assembly instructions for Engino$^{\circledR}$ toy models. The assembly
manual of a toy model can be generated by reversing the decomposition
process of the model to its constituent blocks. As explained in Section
\ref{sec:Prerequisites} the disassembly process may under certain circumstances be blocked
due to the presence of particular geometric structures in the interconnections
between blocks. To avoid such situations we propose a graph theoretic
framework for the analysis of the problem and provide a characterization
of the decompositions that are physically feasible. Moreover, a procedure
to obtain maximal physically feasible decompositions along a given
geometric direction is presented, which can be implemented using well
known, linear time, algorithms for the detection of strongly connected
components in directed graphs. Based on these results, an algorithmic 
procedure for the hierarchical decomposition of a given toy model, which takes
into account the physical feasibility of the intermediate steps, is proposed. 
The final goal of generating a sequence of assembly instructions for the model
is accomplished, by applying a postorder traversal of the hierarchical decomposition 
tree, from which a step-by-step series of instructions can be easily recovered.

As for future extensions that could stem from our presented approach, and future challenges to be tackled, we remark the following:

\begin{itemize}
\item First of all, notice that the connection principle for ENGINO blocks is mainly of binary type (just like those of LEGO and other toy systems), in the sense that even though some types of blocks have several male and/or female connectors, thus allowing for several ways of connecting two given blocks, any connection between two blocks is achieved by matching at least one pair of male-female connections, resulting in a finite set of possible relative spatial configurations between the blocks. An important exception to this is the freely-rotating pivoting connection, which allows for a continuous choice of the pivot angle, so the set of relative spatial configurations becomes infinite. In this paper we have focused on the binary type of connections because of the resulting finiteness of the set of possible spatial configurations, which allows us to tackle the problem by defining the connection directions $\hat{d}_i, \quad i = 1, \ldots, p$. In future work, the feature of pivoting connections will be added to our Physically Feasible Decomposition, based on the fact that pivoting connections during the assembly process must be geometrically feasible, in the sense that small displacements associated with the rotation degree of freedom about the pivot point must be allowed to happen. The main difficulty lies in extending the current definition of the ``fixed'' connection directions $\hat{d}_i, \quad i = 1, \ldots, p$, which will have to depend on the continuous ``pivoting'' degrees of freedom.

\item From the previous point it follows that our method can be applied to several toy systems, and even beyond that to industrial assembly processes \cite{Na88,Wa09} with binary-type connections as defined above. The main advantage of our method is that it requires very little geometrical and physical information about the connecting pieces. This is, at the same time, the main limitation of the method. For example, it does not apply to assemblies that require three or more hands \cite{Sn94}, and more generally it does not deal with cases when force and torque balances are relevant, as in the problems of grasping parts (form closure, force closure, etc.) \cite{Wi94}. However, this does not mean our method cannot be used in combination with these and other advanced assembly features. In fact, our method could be included as a complementary module in (dis-)assembly process planning for existing products in industry. For example, the feature of linearizability \cite{Wi94}, of practical importance in assembly lines, could be incorporated into our method because it is related to the distribution of internal nodes and leaves in our hierarchical PFD graphs. And, with a little bit of imagination, our method could potentially find its utility as a module in the recently discovered molecular assembly processes \cite{Ka17}, because these processes are characterized by constrained geometric arrangements, local interactions and reduced reactivity.
\end{itemize}

\begin{acknowledgements}
This work originated from our participation in the 125th European Study Group with Industry (1st Study Group with Industry in Cyprus). We thank the Mathematics for Industry Network (MI-NET, www.mi-network.org), COST Action TD1409 for generous funding and support with the logistics of this first Study Group with Industry in Cyprus. MDB acknowledges support from Science Foundation Ireland under research grant number 12/IP/1491.

We would also like to thank Costas Sisamos, founder and CEO of Engino Ltd, for the detailed exposition of the problem, his valuable insight on it and his comments on the present paper. Finally, we would like to thank the editor and the anonymous referees for taking time in reading and suggesting modifications to the paper. We highly appreciate it, as the comments have been very useful in improving the paper.
\end{acknowledgements}

\end{document}